\title{Simple Codes and Sparse Recovery with Fast Decoding
}
\author{Mahdi Cheraghchi\thanks{EECS Department, University of Michigan, Ann Arbor, MI, USA. Email: \texttt{mahdich@umich.edu}.} \and Jo\~ao Ribeiro\thanks{Computer Science Department, Carnegie Mellon University, Pittsburgh, PA, USA. Email: \texttt{jlourenc@cs.cmu.edu}.} \thanks{This work was mainly performed while the authors were with the Department of Computing, Imperial College London, UK.
}
}
\date{}
\newcommand{\cC}{\mathcal{C}}
\newcommand{\cE}{\mathcal{E}}
\newcommand{\cX}{\mathcal{X}}
\newcommand{\cS}{\mathcal{S}}
\newcommand{\F}{\mathds{F}}
\newcommand{\supp}{\mathsf{supp}}
\newcommand{\eps}{\epsilon}
\newtheorem{thm}{Theorem}
\newtheorem{lem}[thm]{Lemma}
\newtheorem{coro}[thm]{Corollary}
\newtheorem{defn}[thm]{Definition}
\newtheorem{remark}[thm]{Remark}
\newcommand{\cL}{\mathcal{L}}
\newcommand{\cR}{\mathcal{R}}
\newcommand{\poly}{\textrm{poly}}
\newcommand{\GF}{\textrm{GF}}
\newcommand{\bin}{\mathsf{bin}}
\newenvironment{eqn}{\[}{\]}
\let\originalleft\left
\let\originalright\right
\renewcommand{\left}{\mathopen{}\mathclose\bgroup\originalleft}
\renewcommand{\right}{\aftergroup\egroup\originalright}
\begin{document}

\maketitle

\begin{abstract}
		Construction of error-correcting codes achieving a designated 
		minimum distance parameter is a central problem in coding
		theory.
		In this work, we study a very simple construction
		of binary linear codes that correct a given number of errors $K$. Moreover, we design a simple, nearly 
		optimal syndrome decoder for the code as well. 
		The
		running time of the decoder is only logarithmic
		in the block length of the code, and nearly linear
		in the number of errors $K$.
		This decoder can be applied to exact for-all sparse recovery over any field, improving upon previous results with the same number of measurements.
		Furthermore,
		computation of the syndrome from a received word
		can be done in nearly linear time in the block length.
        We also demonstrate an application of these techniques in
		non-adaptive group testing, and construct simple explicit measurement
		schemes with $O(K^2 \log^2 N)$ tests and
		$O(K^3 \log^2 N)$ recovery time for identifying
		up to $K$ defectives in a population of size $N$.
\end{abstract}

\section{Introduction}\label{sec:intro}

	The problem of constructing low-redundancy codes with practical decoding algorithms that handle a prescribed number of adversarial errors has been extensively studied in coding theory. We distinguish between two standard decoding settings for linear codes: \emph{Syndrome decoding}, where one has access to the syndrome of the corrupted codeword, and \emph{full decoding}, where one has access to the corrupted codeword itself. In both cases, the goal is to return the error pattern.
	
	The syndrome holds extra pre-computed information about the corrupted codeword. As a result, we expect to be able to perform syndrome decoding much faster than full decoding. In fact, while full decoding has complexity at least linear in the block length of the code, syndrome decoding can be potentially accomplished in time \emph{sublinear} in the block length.
	Syndrome decoding is important for various reasons: In many cases, the most efficient way we have of performing full decoding for a given linear code is to first compute the syndrome from the corrupted codeword, and then run a syndrome decoding algorithm.
	Furthermore, as we shall see, syndrome decoding is connected to other widely studied recovery problems.
	
	Two examples of families of high-rate linear codes with good decoding guarantees are the classical BCH codes, which are widely used in practice, and expander codes, introduced by Sipser and Spielman~\cite{SS96}. The properties of BCH codes derive from the theory of polynomials over large finite fields. On the other hand, the guarantees behind expander codes follow from the combinatorial properties of the underlying expander graphs. Due to their combinatorial nature, expander codes have simple decoders, while decoding BCH codes requires  algorithms which perform arithmetic over large fields. However, while BCH codes support sublinear syndrome decoding~\cite{DRS04,DORS06}, no such sublinear syndrome decoders are known for expander codes.
	
	Syndrome decoding of linear codes can be interpreted as sparse recovery over a finite field. 
	In \emph{exact for-all} sparse recovery, we aim to recover all sparse vectors $x$ from $Wx$, where $W$ is a measurement matrix (which may be sampled with high probability from some distribution). The goal is to minimize recovery time and number of measurements (i.e., rows of $W$) with respect to the sparsity and length of the vectors. We say a for-all sparse recovery scheme is \emph{approximate} if it allows recovery (within some error) of the best sparse approximation of arbitrary vectors. 
	There has been significant interest in developing combinatorial algorithms for sparse recovery. Unlike their geometric counterparts, such procedures have the advantage of supporting sublinear time recovery. Furthermore, while sparse recovery is normally studied over the reals, such algorithms can usually be modified to perform sparse recovery over any field.
	
 	In this work, we study a simple combinatorial construction of high-rate binary linear codes that support nearly optimal syndrome decoding. 
 	While we present our decoding algorithms over $\F_2$, our syndrome decoder can be adapted to work over any field, improving upon previous combinatorial algorithms for exact for-all sparse recovery  with the same number of measurements.
 	This adaptation to arbitrary fields is discussed in more detail at the end of Section~\ref{sec:detsynrec}.
 	We remark that the parity-check matrix of our code remains $0$-$1$-valued even when we work over fields other than GF(2).
	
	A different sparse recovery problem that has been extensively studied is that of \emph{Non-Adaptive Group Testing} (NAGT). In this setting, our goal is to identify all defectives within a given population. To this end, we may perform tests by pooling items of our choice and asking whether there is a defective item in the pool. In NAGT, tests are fixed a priori, and so cannot depend on outcomes of previous tests. A main problem in this area consists in finding NAGT schemes supporting sublinear recovery time with few tests in the \emph{zero-error} regime, where it is required that the scheme always succeeds in recovering the set of defectives. 
	In the second part of this work, we present a zero-error NAGT scheme with few tests and a competitive sublinear time recovery algorithm.
	
	\subsection{Related Work}\label{sec:relatedwork}
	
	Sublinear time decoding of BCH codes was studied by Dodis, Reyzin, and Smith~\cite{DRS04,DORS06}, who gave a $\poly(K\log N)$ time syndrome decoder, where $N$ is the block length and $K$ is the maximum number of errors. 
	More precisely, according to~\cite[Proof of Lemma 1]{DORS06} their decoder asymptotically requires $\Theta(K\log^2 K\cdot\log\log K +K^2 \log N)$ multiplications over the extension field of order $N+1$.
	In turn, we can perform multiplications over this field in time $\Theta(\log N\cdot \log\log N)$ under a believable assumption~\cite{HV22}.
	Combining both results yields asymptotic syndrome decoding complexity
	\begin{equation*}
	    \Theta((K\log^2 K\cdot\log\log K +K^2 \log N)\cdot \log N\cdot \log\log N).
	\end{equation*}
	
	In contrast, the best syndrome decoders for expander codes run in time $O(DN)$, where $D$ is the left degree of the expander~\cite{JXHC09}. Full decoding of expander codes takes time $O(N)$ in the regime where $K=\Theta(N)$~\cite{SS96}, but it takes time $O(N\log N)$ when $K$ is small and we want the rate of the expander code to be large~\cite{JXHC09}.
	
	The work on combinatorial for-all sparse recovery algorithms was initiated by Cormode and Muthukrishnan~\cite{CM06}, and several others soon followed~\cite{GSTV06,GSTV07,BGIKS08} with improved recovery time and number of measurements. More recently, sublinear time combinatorial algorithms for approximate for-all sparse recovery with optimal number of measurements~\cite{GLPS17} or very efficient recovery~\cite{CI17} were given (both with strong approximation guarantees). We compare the results obtained by these works in the context of for-all sparse recovery with the result we obtain in this work in Section~\ref{sec:contributions}.
	
	The first zero-error NAGT schemes supporting sublinear recovery time with few tests were given independently by Cheraghchi~\cite{Che13} and Indyk, Ngo, and Rudra~\cite{INR10}. In~\cite{INR10}, the authors present an NAGT scheme which requires $T=O(K^2\log N)$ tests (which is order-optimal) and supports recovery in time $\textrm{poly}(K)\cdot T\log^2 T+O(T^2)$, where $N$ is the population size and $K$ is the maximum number of defectives. However, their scheme is only explicit when $K=O\left(\frac{\log N}{\log\log N}\right)$. 
	Cheraghchi~\cite{Che13} gives explicit schemes which require a small number of tests and handle a constant fraction of test errors. However, his schemes may output some false positives. While this can be remedied, the resulting recovery time will be worse.
	Later, Ngo, Porat, and Rudra~\cite{NPR11} also obtained explicit sublinear time NAGT schemes with near-optimal number of tests that are robust against test errors. In particular, they obtain schemes requiring $T=O(K^2\log N)$ tests with recovery time $\textnormal{poly}(T)$. 
Subsequently to the publication of a shortened version of the present work~\cite{CR19}, Cheraghchi and Nakos~\cite{CN20} constructed explicit NAGT schemes requiring $T=O(K^2\log N)$ tests and recovery time nearly linear in $T$.

	\subsection{Contributions and Techniques}\label{sec:contributions}
	
	Our binary linear codes are a \emph{bitmasked} version of expander codes.
	Roughly speaking, bitmasking a binary $M\times N$ matrix $W$ consists in replacing each entry in $W$ by the $\log N$-bit binary expansion of its corresponding column index, multiplied by that entry of $W$. This gives rise to a new bitmasked matrix of dimensions $M\log N\times N$. 
	The bitmasking technique has already been used in~\cite{CM06,GSTV06,GSTV07,BGIKS08,CI17} in several different ways to obtain sublinear time recovery algorithms for approximate sparse recovery. We provide a detailed explanation of this technique and its useful properties in Section~\ref{sec:sparserec}.
	
	The parity-check matrices of our codes are obtained by following the same ideas as~\cite{BGIKS08,CI17}: 
    We take the parity-check matrix of an expander code, bitmask it, and stack the two matrices. Note that these codes have a blowup of $\log N$, where $N$ is the block length, in the redundancy when compared to expander codes. However, we show that these codes support randomized and deterministic syndrome decoding in time 
    $O(K\log K\cdot \log N)$ under a random expander, where $K$ is the number of errors. We remark that this is within an $O(\log K)$ factor of the optimal recovery time for small $K$. In particular, this also leads to randomized and deterministic full decoders running in time $O(\log K\cdot N\log N)$.
    
    Our syndrome decoders can be made to work over any field with almost no modification. 
    As a result, we obtain a recovery algorithm for exact for-all sparse recovery over any field with nearly optimal recovery time $O(K\log K\cdot \log N)$ from $O(K\log^2 N)$ measurements. This improves upon the recovery time of previous schemes using the same number of measurements in the exact for-all sparse recovery setting~\cite{BGIKS08,CI17}. Although sublinear time recovery is possible with fewer measurements (the optimal number of measurements is $O(K\log(N/K))$), the dependency on $N$ in that case is generally worse than what we obtain, which is optimal. A detailed comparison between our work and previous results in the for-all sparse recovery setting can be found in Table~\ref{table:sparserecresults}.

    Our randomized decoders have several advantages over their deterministic counterparts that make them more practical. 
    First, the hidden constants in the runtime are smaller. 
    Second, the runtime is independent of the degree of the underlying expander. 
    As a result, we can instantiate our codes under explicit expanders with sub-optimal degree without affecting the decoding complexity. Third, the failure probability of the algorithm has a negligible effect on its runtime for large block lengths. Therefore, it can be set to an arbitrarily small constant of choice with limited effect on the runtime. 
    In particular, our randomized full decoder is faster than the expander codes decoder even under a random expander if $K$ is small.
	
        \begin{table}[h]
    \centering
\begin{tabular}{c|c|c|}
\cline{2-3}
                                & Recovery time & Approximate? (Y/N) \\
\hline
\multicolumn{1}{|c|}{\cite{CM06}}          &       $K^2\log^2 N$        & N                  \\ 
\hline
\multicolumn{1}{|c|}{\cite{GSTV06}} & $K\log^2 K\cdot \log^2 N$               & Y                  \\ \hline
\multicolumn{1}{|c|}{\cite{GSTV07}} & $K^2\cdot\textrm{polylog}(N)$            & Y                  \\ \hline
\multicolumn{1}{|c|}{\cite{BGIKS08}}          & $K\log K\cdot\log^2 N$              & N                  \\ \hline
\multicolumn{1}{|c|}{\cite{CI17}}          &  $K\log K\cdot\log^2 N$             & Y                  \\ \hline
\multicolumn{1}{|c|}{This work (see Remark~\ref{rem:readint})} &  $K\log K\cdot\log N$             & N                  \\ \hline
\multicolumn{1}{|c|}{$< K\log^2 N$ measurements~\cite{GLPS17}}          &  $\textrm{poly}(K\log N)$             & Y                  \\ \hline
\multicolumn{1}{|c|}{BCH codes~\cite{DORS06,HV22}}          &       $K^2\log^2 N\cdot \log\log N$        & N                  \\ \hline
\end{tabular}
\caption{Summary of best known previous results and the result obtained in this work on sublinear recovery in the for-all sparse recovery setting. Here, $K$ denotes the sparsity and $N$ the vector length. We omit the $O(\cdot)$ notation in recovery times for simplicity, and in the third column we distinguish between schemes that work for approximate sparse recovery (i.e., arbitrary vectors), versus those that work only in the exact for-all setting. Since it is not relevant for us, we do not distinguish between the approximation guarantees obtained in each work for approximate sparse recovery. 
All works except the last two rows require at least $O(K\log^2 N)$ measurements, which is the number of measurements our scheme requires. 
While sublinear decoding is possible in those cases, the dependency on $N$ is generally worse than what we obtain. For a more detailed description of such schemes, see~\cite[Table 1]{GLPS17} and~\cite[Table 1]{GSTV06}. As mentioned before, our measurement matrices remain $0$-$1$-valued even when we work over fields other than GF(2).}
\label{table:sparserecresults}
\end{table}
    
\begin{remark}\label{rem:readint}
    We assume that reading an integer from memory takes time $O(1)$. If instead we assume that reading an $L$-bit integer takes time $O(L)$, then we incur an extra $\log K$ factor in our deterministic recovery time, for a total runtime of $O(K\log^2 K\cdot \log N)$.
\end{remark}
	
	In the second part of our work, we present a simple, explicit zero-error NAGT scheme with recovery time $O(K^3\log^2 N)$ from $O(K^2\log^2 N)$ tests, where $N$ is the population size and $K$ is the maximum number of defectives. Such a scheme is obtained by bitmasking a \emph{disjunct} matrix. At a high-level, the recovery algorithm works as follows: First, we use the bitmask to obtain a small superset of the set of defectives. Then, we simply apply the naive recovery algorithm for general disjunct matrices to this superset to remove all false positives. The recovery time of this scheme is better than of those presented in Section~\ref{sec:relatedwork}, albeit we are an $O(\log N)$ factor away from the optimal number of tests. Moreover, unlike our scheme, those schemes make use of algebraic list-decodable codes and hence require sophisticated recovery algorithms with large constants. Finally, we note that the bitmasking technique has been used in a different way to obtain efficient NAGT schemes which recover a large fraction of defectives, or even all defectives, with high probability~\cite{LPR16}.

\subsection{Organization}

In Section~\ref{sec:prelims}, we introduce several concepts and results in coding and group testing that will be relevant for our work in later sections. Then, in Section~\ref{sec:sparserec} we present the bitmasking technique and its original application in sparse recovery. We present our code construction and the decoding algorithms in Section~\ref{sec:code}. Finally, our results on non-adaptive group testing can be found in Section~\ref{sec:grouptesting}.

\section{Preliminaries}\label{sec:prelims}

\subsection{Notation}

We denote the set $\{0,\dots,N-1\}$ by $[N]$. Given a vector $x$, we denote its support $\{i:x_i\neq 0\}$ by $\supp(x)$. We say a vector is $K$-sparse if $|\supp(x)|\leq K$. We index vectors and matrix rows/columns starting at $0$. Sets are denoted by calligraphic letters like $\cS$ and $\cX$. In general, we denote the base-$2$ logarithm by $\log$. Given a graph $G$ and a set of vertices $\cS$, we denote its neighborhood in $G$ by $\Gamma(\cS)$. For a matrix $W$, we denote its $i$-th row by $W_{i\cdot}$ and its $j$-th column by $W_{\cdot j}$.

\subsection{Unbalanced Bipartite Expanders}\label{sec:expanderscondensers}

In this section, we introduce unbalanced bipartite expander graphs. We will need such graphs to define our code in Section~\ref{sec:code}.

\begin{defn}[Bipartite Expander]\label{def:expander}
    A bipartite graph $G=(\cL,\cR,\cE)$ is said to be a \emph{$(D,K,\eps)$-bipartite expander} if every vertex $u\in \cL$ has degree $D$ (i.e., $G$ is left $D$-regular) and for every $\cS\subseteq \cL$ satisfying $|\cS|\leq K$ we have $|\Gamma(\cS)|\geq (1-\eps) D|\cS|$. 
    
    Such a graph is said to be \emph{layered} if we can partition $\cR$ into disjoint subsets $\cR_1,\dots,\cR_D$ with $|\cR_i|=|\cR|/D$ for all $i$ such that every $u\in \cL$ has degree $1$ in the induced subgraph $G_i=(\cL,\cR_i,\cE)$. We call such $i\in [D]$ \emph{seeds} and denote the neighborhood of $\cS$ in $G_i$ by $\Gamma_i(\cS)$.
    
    The graph $G$ can be defined by the function $C:\cL\times[D]\to\cR$ which maps $(u,i)\in\cL\times[D]$ to the neighbor of $u$ in $\cR_i$.
    For convenience, we also define $C_i=C(\cdot,i)$, which defines the subgraph $G_i$.
\end{defn}
Informally, we say that a bipartite expander graph is \emph{unbalanced} if $|\cR|$ is much smaller than $|\cL|$. The next lemma follows immediately from Markov's inequality.
\begin{lem}\label{lem:markov}
    Fix some $c>1$ and a set $\cS$ such that $|\cS|\leq K$, and let $G$ be a $(D,K,\eps)$-layered bipartite expander. Then, for at least a $(1-1/c)$-fraction of seeds $i\in[D]$, it holds that
    \begin{equation*}
        |\Gamma_i(\cS)|\geq (1-c\eps)|\cS|.
    \end{equation*}
\end{lem}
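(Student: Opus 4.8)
The plan is to set up a single nonnegative quantity per seed that measures the "collision defect" of the layer and then apply Markov's inequality to its average. Concretely, for each seed $i\in[D]$ define $Z_i := |\cS| - |\Gamma_i(\cS)|$. Since $G$ is layered, every $u\in\cL$ has degree exactly $1$ in $G_i$, so the edges from $\cS$ into $\cR_i$ are $|\cS|$ in total and $|\Gamma_i(\cS)|\le|\cS|$; hence $Z_i\ge 0$, and $Z_i$ counts exactly how much the map $u\mapsto\Gamma_i(\{u\})$ fails to be injective on $\cS$.

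The first key step is the identity $\sum_{i\in[D]}|\Gamma_i(\cS)| = |\Gamma(\cS)|$. This holds because the sets $\cR_1,\dots,\cR_D$ partition $\cR$, so $\Gamma(\cS)=\bigcup_{i\in[D]}\Gamma_i(\cS)$ is a disjoint union. Combining this with the expansion guarantee $|\Gamma(\cS)|\ge(1-\eps)D|\cS|$ (valid since $|\cS|\le K$) gives
\begin{equation*}
    \sum_{i\in[D]} Z_i \;=\; D|\cS| - \sum_{i\in[D]}|\Gamma_i(\cS)| \;=\; D|\cS| - |\Gamma(\cS)| \;\le\; \eps D|\cS|.
\end{equation*}
In other words, the average of $Z_i$ over the $D$ seeds is at most $\eps|\cS|$.

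Now I would apply Markov's inequality to the uniform distribution over seeds: the fraction of $i\in[D]$ with $Z_i\ge c\eps|\cS|$ is at most $\frac{\eps|\cS|}{c\eps|\cS|}=\frac1c$ (the degenerate cases $\eps=0$ or $|\cS|=0$ being trivial). Therefore, for at least a $(1-1/c)$-fraction of seeds we have $Z_i< c\eps|\cS|$, i.e.\ $|\Gamma_i(\cS)|>(1-c\eps)|\cS|\ge(1-c\eps)|\cS|$, which is the claim.

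I do not expect any real obstacle here: the only mild subtlety is recognizing that the layered structure makes the per-seed neighborhoods pairwise disjoint, which turns the union bound into an exact identity and lets the global expansion bound convert cleanly into a bound on the average defect; everything else is the one-line Markov argument the lemma statement already advertises.
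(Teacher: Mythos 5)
Your proof is correct and is exactly the argument the paper intends: it states the lemma "follows immediately from Markov's inequality" without writing it out, and your layer-wise disjointness identity $|\Gamma(\cS)|=\sum_{i\in[D]}|\Gamma_i(\cS)|$ plus Markov applied to the deficiencies $Z_i=|\cS|-|\Gamma_i(\cS)|$ is the standard way to fill that in. No gaps.
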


We will also need the following lemma that bounds the number of right vertices with a single neighbor in a given subset of left vertices.
\begin{lem}\label{lem:singleneighbor}
    Let $G=(\cL,\cR,\cE)$ be a layered bipartite graph.
    If $\cS\subseteq\cL$ satisfies $|\Gamma_i(\cS)|\geq(1-\delta)|\cS|$, then the number of right vertices in $\Gamma_i(\cS)$ with only one neighbor in $\cS$ (with respect to the subgraph $G_i$) is at least $(1-2\delta)|\cS|$.
\end{lem}
\begin{proof}
    Suppose that there are fewer than $(1-2\delta)|\cS|$ vertices in $\Gamma_i(\cS)$ with only one neighbor in $\cS$.
    Since $G_i$ has left-degree $1$, this means that there are more than $2\delta|\cS|$ vertices in $\cS$ which are adjacent to right vertices with degree at least $2$.
    Therefore, we can upper bound $|\Gamma_i(\cS)|$ as
    \begin{equation*}
        |\Gamma_i(\cS)|<(1-2\delta)|\cS|+\frac{1}{2}\cdot 2\delta|\cS| =(1-\delta)|cS|,
    \end{equation*}
    which contradicts our assumption.
\end{proof}

The next theorem states we can sample nearly-optimal layered unbalanced bipartite expanders with high probability.
\begin{thm}[\protect{\cite[Lemma 4.2]{CRVW02}}]\label{thm:expander}
   Given any $N$, $K$, and $\eps$, we can sample a layered $(D,K,\eps)$-bipartite expander graph $G=(\cL=[N],\cR=[MD],\cE)$ with high probability for $D=O\left(\frac{\log N}{\eps}\right)$ and $M=O\left(\frac{K}{\eps}\right)$.
\end{thm}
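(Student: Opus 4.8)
The plan is to prove this by the probabilistic method, exhibiting a simple distribution over layered left-$D$-regular bipartite graphs and showing a sample from it is a $(D,K,\eps)$-bipartite expander with probability $1-o(1)$. Concretely: partition $\cR=[MD]$ into $D$ blocks $\cR_1,\dots,\cR_D$, each of size $M$, and for every $u\in\cL$ and every seed $i\in[D]$ choose the unique neighbor of $u$ in $\cR_i$ independently and uniformly at random from $\cR_i$. By construction the graph is layered with $|\cR_i|=|\cR|/D$, and it is left $D$-regular because the $D$ chosen neighbors of any $u$ lie in distinct blocks and are hence distinct. So only the expansion guarantee needs to be established.

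Next I would fix a set $\cS\subseteq\cL$ with $|\cS|=s\le K$ and bound the probability it fails to expand. Since the blocks are disjoint, $|\Gamma(\cS)|=\sum_{i=1}^D|\Gamma_i(\cS)|$, and exactly $Ds$ edges leave $\cS$; thus $|\Gamma(\cS)|<(1-\eps)Ds$ only if at least $\eps Ds$ of these edges are ``collisions'', i.e.\ land on a right vertex already hit by an earlier edge incident to $\cS$. Revealing the $Ds$ edges in any fixed order, and using that within any single block at most $s$ right vertices are ever touched, the conditional probability that a given edge is a collision is at most $s/M\le K/M$. A union bound over which $\eps Ds$ edges are the colliding ones then gives
\[
\Pr\left[|\Gamma(\cS)|<(1-\eps)Ds\right]\le\binom{Ds}{\eps Ds}\left(\frac{K}{M}\right)^{\eps Ds}\le\left(\frac{e}{\eps}\cdot\frac{K}{M}\right)^{\eps Ds},
\]
using $\binom{n}{k}\le(en/k)^k$. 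Choosing $M=O(K/\eps)$ large enough (e.g.\ $M\ge 2eK/\eps$) makes the base at most $1/2$, so this probability is at most $2^{-\eps Ds}$.

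Finally I would union bound over all sets, ranging over every size $s\in\{1,\dots,K\}$ (this is necessary, since expansion of the size-$K$ sets does not by itself imply expansion of smaller sets). As there are $\binom{N}{s}\le(eN/s)^s$ sets of size $s$, the probability that expansion fails somewhere is at most
\[
\sum_{s=1}^{K}\left(\frac{eN}{s}\right)^s 2^{-\eps Ds}=\sum_{s=1}^{K}\left(\frac{eN}{s}\cdot 2^{-\eps D}\right)^s,
\]
and taking $D=O(\log N/\eps)$ with a suitable constant (so that $eN\cdot 2^{-\eps D}\le 1/N$, say) makes each summand at most $N^{-s}$ and the whole sum $O(1/N)=o(1)$. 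This yields the claimed parameters. The only real work is the bookkeeping that reconciles the two constraints: $M=\Omega(K/\eps)$ must be large enough to drive the per-set failure probability down to an exponential $2^{-\Theta(\eps Ds)}$, and then $D=\Omega(\log N/\eps)$ must be large enough for that exponential to beat the $(eN/s)^s$ entropy factor for all $s\le K$ at once; the geometric decay of the summands is what makes the all-sizes union bound harmless. (Alternatively, one may simply cite \cite[Lemma 4.2]{CRVW02}, which is exactly this probabilistic construction.)
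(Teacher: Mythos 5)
The paper offers no proof of this theorem---it is stated as a citation to \cite[Lemma 4.2]{CRVW02}, followed only by a description of the sampling procedure (a random function $C\colon[N]\times[D]\to[M]$), which is exactly the layered random construction you analyze. Your probabilistic argument---bounding the failure probability of a fixed set of size $s$ by counting colliding edges via a chain-rule/union-bound over $\binom{Ds}{\eps Ds}$ choices, then union bounding over all $s\le K$ and using the geometric decay of the summands---is correct and is the standard proof of the cited lemma, so it soundly fills in what the paper leaves to the reference.
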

The graph in Theorem~\ref{thm:expander} can be obtained by sampling a random function $C\colon [N]\times [D]\to [M]$, and choosing $(x;(s,y))$ as an edge in $G$ if $C(x,s)=y$. 

\subsection{Coding Theory}

In this section, we define some basic concepts from coding theory that we use throughout our paper. We point the reader to~\cite{MS77} for a much more complete treatment of the topic.

Given an alphabet $\Sigma$, a \emph{code over $\Sigma$ of length $N$} is simply a subset of $\Sigma^N$. We will be focusing on the case where codes are \emph{binary}, which corresponds to the case $\Sigma=\{0,1\}$. For a code $\cC\subseteq\Sigma^N$, we call $N$ the \emph{block length} of $\cC$. If $|\Sigma|=q$, the \emph{rate} of $\cC$ is defined as
\begin{equation*}
    \frac{\log_q|\cC|}{N}.
\end{equation*}
We will study families of codes indexed by the block length $N$. We do not make this dependency explicit, but it is always clear from context.

If $\Sigma$ is a field, we say $\cC$ is a \emph{linear code} if $c_1+c_2\in\cC$ whenever $c_1,c_2\in\cC$. In other words, $\cC$ is a linear code if it is a vector space over $\Sigma$. To each linear code $\cC$ we can associate a unique \emph{parity-check matrix} $H$ such that $\cC=\ker H$. Given such a parity-check matrix $H$ and a vector $x\in\Sigma^N$, we call $Hx$ the \emph{syndrome} of $x$. Clearly, we have $x\in\cC$ if and only if its syndrome is zero.

\subsection{Group Testing}\label{sec:nagtpre}

As mentioned in Section~\ref{sec:intro}, in group testing we are faced with a set of $N$ items, some of which are defective. Our goal is to correctly identify all defective items in the set. To this end, we are allowed to test subsets, or pools, of items. The result of such a test is $1$ if there is a defective item in the pool, and $0$ otherwise. Ideally, we would like to use as few tests as possible, and have efficient algorithms for recovering the set of defective items from the test results.

In Non-Adaptive Group Testing (NAGT), all tests are fixed a priori, and so cannot depend on the outcome of previous tests. While one can recover the defective items with fewer tests using adaptive strategies, practical constraints preclude their use and make non-adaptive group testing relevant for most applications.

It is useful to picture the set of $N$ items as a binary vector $x\in\{0,1\}^N$, where the $1$'s stand for defective items. Then, the $T$ tests to be performed can be represented by a $T\times N$ test matrix $W$ satisfying
\begin{equation*}
    W_{ij}=\begin{cases}
        1,\textrm{ if item $j$ is in test $i$}\\
        0,\textrm{ else.}
    \end{cases}
\end{equation*}
The outcome of the $T$ tests, which we denote by $W\odot x$, corresponds to the bit-wise union of all columns corresponding to defective items. In other words, if $\cS$ denotes the set of defective items, we have
\begin{equation*}
    W\odot x =\bigvee_{j\in\cS} W_{\cdot j},
\end{equation*}
where the bit-wise union of two $N$-bit vectors, $x\vee y$, is an $N$-bit vector satisfying
\begin{equation*}
    (x\vee y)_i=\begin{cases}
        1,\textrm{ if $x_i=1$ or $y_i=1$}\\
        0,\textrm{ else.}
    \end{cases}
\end{equation*}

We say an NAGT scheme is \emph{zero-error} if we can always correctly recover the set of defectives. Zero-error NAGT schemes are equivalent to \emph{disjunct} matrices.
\begin{defn}[Disjunct matrix]\label{def:disjunct}
    A matrix $W$ is said to be \emph{$d$-disjunct} if the bit-wise union of any up to $d$ columns of $W$ does not contain any other column of $W$.
\end{defn}
The term \emph{contains} used in Definition~\ref{def:disjunct} is to be interpreted as follows: A vector $x$ is contained in a vector $y$ if $y_i=1$ whenever $x_i=1$, for all $i$. If we know there are at most $K$ defective items, taking the rows of a $K$-disjunct $T\times N$ matrix as the tests to be performed leads to a NAGT algorithm with $T$ tests and a simple $O(TN)$ recovery algorithm: An item is not defective if and only if it is part of some test with a negative outcome, so one can just check whether each item participates in a negative test.

As a result, much effort has been directed at obtaining better randomized and explicit\footnote{By an \emph{explicit} construction, we mean one in which we can construct the matrix in time polynomial in $N$.} constructions of $K$-disjunct matrices, with as few tests as possible with respect to number of defectives $K$ and population size $N$. The current best explicit construction of a $K$-disjunct matrix due to Porat and Rothschild~\cite{PR08} requires $O(K^2\log N)$ rows (i.e., tests), while a probabilistic argument shows that it is possible to sample $K$-disjunct matrices with $O(K^2\log(N/K))$ rows with high probability. We remark that both these results are optimal up to a $\log K$ factor~\cite{DR82}.
\begin{thm}[\cite{DR82,PR08}]\label{thm:explicitdisjunct}
    There exist explicit constructions of $K$-disjunct matrices with $T=O(K^2\log N)$ rows. Moreover, it is possible to sample a $K$-disjunct matrix with $T=O(K^2\log(N/K))$ rows with high probability. Both these results are optimal up to a $\log K$ factor, and the matrix columns are $O(K\log N)$-sparse in the two constructions.
\end{thm}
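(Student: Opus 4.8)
The plan is to derive both constructions from the standard reduction that turns a $q$-ary code of large minimum distance into a disjunct matrix, and then to feed it the right codes. Given a $q$-ary code $\cC\subseteq[q]^n$ with at least $N$ codewords, build a $T\times N$ matrix $W$ with $T=nq$ whose rows are indexed by pairs $(i,a)\in[n]\times[q]$ and whose columns are indexed by any $N$ codewords of $\cC$, setting $W_{(i,a),c}=1$ exactly when $c_i=a$; each column then has weight exactly $n$. The key claim is that if $\cC$ has minimum distance $d>n(1-1/K)$, then $W$ is $K$-disjunct: two distinct codewords agree in at most $n-d<n/K$ coordinates, so for any codeword $c_0$ and any $K$ other codewords $c_1,\dots,c_K$, the set of coordinates on which $c_0$ agrees with some $c_j$ has size at most $K(n-d)<n$; hence there is a coordinate $i$ with $c_j(i)\neq c_0(i)$ for every $j$, and then $W_{(i,c_0(i)),c_0}=1$ while $W_{(i,c_0(i)),c_j}=0$ for all $j$, so the column of $c_0$ is not contained in $\bigvee_j W_{\cdot c_j}$. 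Restricting a disjunct matrix to a subset of columns keeps it disjunct, so using only $N$ of the codewords is harmless.

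For the explicit statement I would instantiate this with a code over an alphabet of size $q=\Theta(K)$ of relative minimum distance just above $1-1/K$; the $q$-ary Gilbert--Varshamov bound guarantees such a code of rate $\Omega(1/(K\log K))$, hence of length $n=\Theta(K\log_q N)=\Theta(K\log N)$ (the relevant estimate uses that the $q$-ary entropy function attains its maximum value $1$ at $1-1/q$ and decreases only quadratically around it, so picking $q$ a constant factor above $K$ keeps the gap $1-H_q(1-1/K)=\Omega(1/(K\log K))$). To make this explicit rather than existential, I would build the $N$ codewords deterministically by the method of conditional expectations, fixing all coordinates one at a time so as never to increase the expected number of codeword pairs that end up at distance less than $d$; the Gilbert--Varshamov calculation makes this expectation less than $1$ for the uniformly random choice, so it stays below $1$ and is therefore $0$ at the end, yielding the code in time $\poly(N,n,q)$. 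Plugged into the reduction, this gives an explicit $K$-disjunct matrix with $T=nq=O(K^2\log N)$ rows and columns of weight $n=O(K\log N)$. (A Reed--Solomon code could be used instead and would avoid all derandomization, but the constraint that the alphabet be at least as large as the block length then forces both to be $\Theta(K\log N/\log(K\log N))$, costing an extra almost-logarithmic factor in $N$ and missing the stated bound --- which is exactly why the Gilbert--Varshamov route is needed.)

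For the probabilistic statement I would discard the code altogether and sample $W\in\{0,1\}^{T\times N}$ with independent entries, each $1$ with probability $p=1/(K+1)$. Fixing a column $j_0$ and a set $\cS$ of $K$ other columns, $W_{\cdot j_0}$ is contained in $\bigvee_{j\in\cS}W_{\cdot j}$ exactly when no row $i$ has $W_{i,j_0}=1$ and $W_{i,j}=0$ for all $j\in\cS$, which has probability $(1-p(1-p)^K)^T\leq\exp(-Tp(1-p)^K)$, and $p(1-p)^K=\Omega(1/K)$ for this $p$. A union bound over the at most $N\binom{N-1}{K}$ pairs $(j_0,\cS)$ then bounds the probability that $W$ is not $K$-disjunct by $\exp(O(K\log(N/K))-\Omega(T/K))$, which is $o(1)$ once $T=O(K^2\log(N/K))$. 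Each column has weight $\mathrm{Bin}(T,p)$ with mean $pT=O(K\log(N/K))=O(K\log N)$, so by a Chernoff bound and a union bound over the $N$ columns all column weights are $O(K\log N)$ with high probability (or one simply resamples any column that comes out too heavy). The optimality claim is the Dyachkov--Rykov lower bound $T=\Omega(K^2\log N/\log K)$, which follows from a counting argument on the column supports regarded as a $K$-cover-free family and which we would just cite from~\cite{DR82}.

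I expect the main obstacle to be the explicit construction. The Gilbert--Varshamov computation has to be pushed through with $q=\Theta(K)$ and target relative distance $1-1/K$, which lies right next to the maximizer $1-1/q$ of the $q$-ary entropy, so $H_q$ there must be controlled by a second-order expansion rather than a crude bound; and the derandomization must be set up so that the conditional expectations are polynomial-time computable and the rate is lost only by a constant factor. The reduction itself and the probabilistic argument are short once the parameters have been fixed.
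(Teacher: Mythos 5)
Your proposal is correct and follows essentially the same route as the sources the paper cites for this theorem (the paper itself gives no proof): the Kautz--Singleton reduction applied to a Gilbert--Varshamov code over an alphabet of size $\Theta(K)$, derandomized by conditional expectations, is precisely the Porat--Rothschild construction; the Bernoulli$(1/(K+1))$ sampling with a union bound is the standard probabilistic argument; and the optimality claim is the Dyachkov--Rykov counting bound. One small slip: from rate $\Omega(1/(K\log K))$ the length is $n=O(K\log K\cdot\log_q N)=O(K\log N)$, not ``$\Theta(K\log_q N)$'' as written, but this does not affect the final bounds $T=nq=O(K^2\log N)$ and column weight $n=O(K\log N)$.
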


\section{Bitmasked Matrices and Exact Sparse Recovery}\label{sec:sparserec}

In this section, we describe the bitmasking technique, along with its application in sparse recovery presented by Berinde et al.~\cite{BGIKS08}. As already mentioned, later on Cheraghchi and Indyk~\cite{CI17} modified this algorithm to handle approximate sparse recovery with stronger approximation guarantees, and gave a randomized version of this algorithm that allows for faster recovery.

Fix a matrix $W$ with dimensions $M\times N$. Consider another matrix $B$ of dimensions $\log N\times N$ such that the $j$-th column of $B$ contains the binary expansion of $j$ with the least significant bits on top. We call $B$ a \emph{bit-test matrix}. Given $W$ and $B$, we define a new \emph{bitmasked} matrix $W\otimes B$ with dimensions $M\log N\times N$ by setting the $i$-th row of $W\otimes B$ for $i=q\log N+t$ as the coordinate-wise product of the rows $W_{q\cdot}$ and $B_{t\cdot}$ for $q\in[M]$ and $t\in [\log N]$. This means that we have
\begin{equation*}
    (W\otimes B)_{i,j}= W_{q,j}\cdot B_{t,j} = W_{q,j}\cdot \mathsf{bin}_t(j),
\end{equation*}
for $i=q\log N+t$ and $j\in[N]$, where $\mathsf{bin}_t(j)$ denotes the $t$-th least significant bit of $j$.

Let $W$ be the adjacency matrix of a $(D,K,\eps)$-bipartite expander graph $G=(\cL=[N],\cR=[M],\cE)$. We proceed to give a high level description of the sparse recovery algorithm from~\cite{BGIKS08}. Suppose we are given access to $(W\otimes B)x$ for some unknown $K$-sparse vector $x$ of length $N$. Recall that our goal is to recover $x$ from this product very efficiently. A fundamental property of the bitmasked matrix $W\otimes B$ is the following: Suppose that for some $q\in[M]$ we have
\begin{equation}\label{eq:condsupp}
    \supp(x)\cap \supp(W_{q\cdot})=\{u\}
\end{equation}
for some $u\in [N]$. We claim that we can recover the binary expansion of $u$ directly from the $\log N$ products
\begin{equation*}\label{eq:prods}
    (W\otimes B)_{q\log N}\cdot x,(W\otimes B)_{q\log N+1}\cdot x, \dots, (W\otimes B)_{q\log N+\log N-1}\cdot x.
\end{equation*}
In fact, if~\eqref{eq:condsupp} holds, then for $i=q\log N+t$ it is the case that
\begin{equation*}
    (W\otimes B)_i \cdot x = \sum_{j=1}^N W_{q,j}\cdot \bin_t(j) \cdot x_j = \bin_t(u),
\end{equation*}
since $W_{q,j}\cdot x_j\neq 0$ only if $j=u$. In words, if $u$ is the unique neighbor of $q$ in $\supp(x)$, then the entries of $(W\otimes B)x$ indexed by $q\log N,\dots,q\log N+\log N-1$ spell out the binary expansion of $u$.

By the discussion above, if the edges exiting $\supp(x)$ in $G$ all had different neighbors in the right vertex set, we would be able to recover $x$ by reading off the binary expansion of the elements of $\supp(x)$ from the entries of $(W\otimes B)x$. However, if there are edges $(u,v)$ and $(u',v)$ for $u,u'\in\supp(x)$ in $G$, it is not guaranteed that we will recover $u$ and $u'$ as elements of $\supp(x)$. While such collisions are unavoidable, and thus we cannot be certain we recover $x$ exactly, the expander properties of $G$ ensure that the number of collisions is always a small fraction of the total number of edges. This means we will make few mistakes when reconstructing $\supp(x)$. As a result, setting $\eps$ to be a small enough constant and using a simple voting procedure (which we refrain from discussing as it is not relevant to us), we can recover a sparse vector $y$ that approximates $x$ in the sense that
\begin{eqn}
    \|x-y\|_0\leq \frac{\|x\|_0}{2}.
\end{eqn}
We can then repeat the procedure on input $(W\otimes B)(x-y)$ to progressively refine our approximation of $x$. 
As a result, we recover a $K$-sparse vector $x$ exactly in at most $\log K$ iterations.

\section{Code Construction and Decoding}\label{sec:code}

In this section, we define our code that is able to tolerate a prescribed number of errors, and analyze efficient syndrome decoding and full decoding algorithms.

Let $N$ be the desired block-length of the code, $K$ an upper bound on the number of adversarial errors introduced, and $\epsilon\in(0,1)$ a constant to be determined later. We fix an adjacency matrix $W$ of a $(D,K,\eps)$-layered unbalanced bipartite expander graph $G=(\cL,\cR,\cE)$ with $\cL=[N]$ and $\cR=[D\cdot M]$, where
\begin{equation*}
    D=O\left(\frac{\log N}{\eps}\right)\quad \textrm{and}\quad M=O\left(\frac{K}{\eps}\right).
\end{equation*}
Such an expander $G$ can be obtained with high probability by sampling a random function $C\colon [N]\times [D]\to [M]$ as detailed in Section~\ref{sec:expanderscondensers}. 

We define our code $\cC\subseteq \{0,1\}^N$ by setting its parity-check matrix $H$ as
\begin{equation*}
    H=\begin{bmatrix}
    W\\
    W\otimes B
    \end{bmatrix}.
\end{equation*}
It follows that $H$ has dimensions $(D\cdot M(1+\log N))\times N$, and so $\cC$ has rate at least
\begin{equation*}
    1-\frac{D\cdot M(1+\log N)}{N} = 1-O\left(\frac{K\cdot \log^2 N}{\eps^2 N}\right).
\end{equation*}
In particular, if $K$ and $\epsilon$ are constants, then $\cC$ has rate at least $1-O\left(\frac{\log^2 N}{N}\right)$.

Note that instead of storing the whole parity-check matrix $H$ in memory, one can just store the function table of $C$, which requires space $ND\log M$, along with a binary lookup table of dimensions $\log N\times N$ containing the $\log N$-bit binary expansions of $0,\dots,N-1$.

\subsection{Syndrome Decoding}\label{sec:syndec}

In this section, we study algorithms for syndrome decoding of $\cC$. Fix some codeword $c\in\cC$, and suppose $c$ is corrupted by some pattern of at most $K$ (adversarially chosen) errors. Let $x$ denote the resulting corrupted codeword. We have $x=c+e$, for a $K$-sparse error vector $e$. The goal of syndrome decoding is to recover $e$ from the syndrome $Hx$ as efficiently as possible. Our decoder is inspired by the techniques from~\cite{BGIKS08} presented in Section~\ref{sec:sparserec}, and also the sparse recovery algorithms presented in~\cite{CI17}. 

For the sake of exposition, we consider only the case of decoding over GF(2) -- the adaptation to arbitrary fields is simple and is discussed at the end of Section~\ref{sec:detsynrec}.

\subsubsection{A Deterministic Algorithm}\label{sec:detsynrec}

In this section, we present and analyze our deterministic decoder which on input $Hx$ recovers the error vector $e$ in sublinear time. 
Before we proceed, we fix some notation: For $s\in[D]$, let $G_s$ denote the subgraph of $G$
induced by the function $C_s$ (recall Definition~\ref{def:expander}), 
and let $W_s$ be its adjacency matrix. Informally, our decoder receives
\begin{equation*}
    Hx=\begin{bmatrix}
    Wx\\
    (W\otimes B)x
    \end{bmatrix}
\end{equation*}
as input, and works as follows:
\begin{enumerate}
    \item Estimate the size of $\supp(e)$. This can be done by computing $\|W_s\cdot x\|_0$ for all seeds $s\in[D]$ and taking the maximum;

    \item Using information from $W_{s}\cdot x$ and $(W_{s}\otimes B)x$ for the good seed fixed in Step $2$, recover a $K$-sparse vector $y$ which approximates the error pattern $e$ following the ideas from Section~\ref{sec:sparserec};
    
    \item If needed, repeat these three steps with $x-y$ in place of $x$.
\end{enumerate}
A detailed description of the deterministic decoder can be found in Algorithm~\ref{alg:detdecoder}. We will proceed to show the procedure detailed in Algorithm~\ref{alg:detdecoder} returns the correct error pattern $e$, provided $\epsilon$ is a small enough constant.
\begin{algorithm}
\caption{Deterministic decoder}\label{alg:detdecoder}
\begin{algorithmic}[1]
\Procedure{Estimate}{$Wx$}\Comment{Estimates $|\supp(e)|$ and outputs best seed}
\For{$s\in[D]$}
    \State Compute $L_s=\|W_s\cdot x\|_0$
\EndFor
\State Output $(\max_s L_s,\arg\max_s L_s)$
\EndProcedure

\Procedure{Approximate}{$W_s\cdot x,(W_s\otimes B)x$}\Comment{Computes a good approximation of $\supp(e)$}
\State Set $y=\mathbf{0}$

\For{$q=0,1,\dots,M-1$}
    \If{$(W_{s}\cdot x)_q\neq 0$}
        \State Let $u\in[N]$ be the integer with binary expansion
        \begin{equation*}
            (W_s\otimes B)_{q\log N}\cdot x,(W_s\otimes B)_{q\log N+1}\cdot x, \dots, (W_s\otimes B)_{q\log N+\log N-1}\cdot x,
        \end{equation*}
        \State Set $y_u=1$.
    \EndIf
\EndFor
\State Output $y$
\EndProcedure

\Procedure{Decode}{$Hx$} \Comment{The main decoding procedure}

\State Set $(L,s)=\textsc{Estimate}(Wx)$

\If{$L=0$}\Comment{If no errors found, stop and return the zero vector}
    \State Output $y=\mathbf{0}$

\Else

    \State Set $y=\textsc{Approximate}(W_s\cdot x,(W_s\otimes B)x)$

    \State Set $z=\textsc{Decode}(H(x-y))$
    \State Output $y+z$

\EndIf

\EndProcedure

\end{algorithmic}
\end{algorithm}

We begin by showing that procedure \textsc{Estimate} in Algorithm~\ref{alg:detdecoder} returns a good approximation of the size of $\supp(e)$ along with a good seed.
\begin{lem}\label{lem:estimatesupport}
    Procedure $\textnormal{\textsc{Estimate}}(Wx)$ in Algorithm~\ref{alg:detdecoder} returns a seed $s\in[D]$ satisfying
    \begin{equation*}
        (1-2\epsilon)|\supp(e)|\leq |\Gamma_s(\supp(e))|\leq |\supp(e)|,
    \end{equation*}
    where the parameter $\eps$ comes from the underlying expander graph.
\end{lem}
\begin{proof}
    Recall that we defined $L_s=\|W_s\cdot x\|_0$ and $L=\max_{s\in[D]}\|W_{s}\cdot x\|_0$.
    First, observe that $W_s\cdot x=W_s\cdot e$ for all $s$. Then, the upper bound $L\leq |\supp(e)|$ follows from the fact that
    \begin{equation*}
        \|W_s\cdot e\|_0\leq |\supp(e)|
    \end{equation*}
    for all seeds $s$ and all vectors $e$, since each column of $W_s$ has Hamming weight $1$ by the fact that the underlying graph $G$ is layered. 
    
    It remains to lower bound $L$. Since $e$ is $K$-sparse, we know that
    \begin{equation*}
        |\Gamma(\supp(e))|\geq (1-\epsilon)D|\supp(e)|.
    \end{equation*}
    By an averaging argument, it follows there is at least one seed $s\in[D]$ such that
    \begin{equation*}
        |\Gamma_s(\supp(e))|\geq (1-\epsilon)|\supp(e)|.
    \end{equation*}
    As a result, by Lemma~\ref{lem:singleneighbor} the number of vertices in $\Gamma_s(\supp(e))$ adjacent to a single vertex in $\supp(e)$ is at least
    \begin{equation*}
        (1-2\epsilon)|\supp(e)|,
    \end{equation*}
    and thus $|\Gamma_s(\supp(e))|\geq \|W_s\cdot x\|_0\geq (1-2\epsilon)|\supp(e)|$.
\end{proof}

We now show that, provided a set $\cX\subseteq \cL=[N]$ has good expansion properties in $G_s$, then procedure \textsc{Approximate} in Algorithm~\ref{alg:detdecoder} returns a good approximation of $\cX$.
\begin{lem}\label{lem:recover}
    Fix a vector $x\in\{0,1\}^N$ and denote its support by $\cX$. 
    Suppose that
    \begin{equation}\label{eq:suppneighbor}
        |\Gamma_s(\cX)|\geq \left(1-2\eps\right)|\cX|
    \end{equation}
    for a given seed $s\in[D]$.
    Then, procedure $\textnormal{\textsc{Approximate}}(W_s\cdot x,(W_s\otimes B)x)$ returns an $|\cX|$-sparse vector $y\in\{0,1\}^N$ such that
    \begin{equation*}
        \|x-y\|_0\leq 5\eps \|x\|_0.
    \end{equation*}
\end{lem}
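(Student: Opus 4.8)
The plan is to analyze which vertices $u\in\cX$ fail to be recovered by \texttt{Approximate} and bound their number using the expansion hypothesis~\eqref{eq:suppneighbor}. Recall that \texttt{Approximate} iterates over $q\in[M]$, and whenever $(W_s\cdot x)_q\neq 0$ it reads off an integer $u$ from the $\log N$ bitmasked syndrome entries indexed by $q\log N,\dots,q\log N+\log N-1$ and sets $y_u=1$. As explained in Section~\ref{sec:sparserec}, if $q$ is a right-vertex of $G_s$ whose unique neighbor in $\cX$ is some $u$ (i.e.\ $\supp(x)\cap\supp((W_s)_{q\cdot})=\{u\}$), then the recovered integer is exactly the binary expansion of $u$, so \texttt{Approximate} correctly marks $y_u=1$. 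Since $G_s$ is layered and left-$1$-regular on its induced subgraph, each $u\in\cX$ has exactly one neighbor $q$ in $G_s$; call $u$ \emph{isolated} if that neighbor $q$ is adjacent (in $G_s$) to no other vertex of $\cX$. Every isolated $u$ is correctly recovered.

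First I would count the isolated vertices. Let $\cX'\subseteq\cX$ be the set of isolated vertices. Since $G_s$ restricted to $\cX$ is a left-$1$-regular bipartite graph with $|\Gamma_s(\cX)|$ right-endpoints, a standard double-counting / inclusion argument (the same one used in the proof of Lemma~\ref{lem:estimatesupport}) gives that the number of right-vertices with a unique neighbor in $\cX$ is at least $2|\Gamma_s(\cX)|-|\cX|$, hence $|\cX'|\geq 2|\Gamma_s(\cX)|-|\cX|\geq 2\left(1-\frac{2\nu}{5}\right)|\cX|-|\cX| = \left(1-\frac{4\nu}{5}\right)|\cX|$. Consequently the number of non-isolated vertices is at most $\frac{4\nu}{5}|\cX|$. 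Next I would bound the coordinates where $y$ is wrong. We have $\supp(x-y)\subseteq (\cX\setminus\cX')\cup(\supp(y)\setminus\cX)$: the first term collects entries of $\cX$ possibly missed, the second collects spurious $1$'s introduced by \texttt{Approximate}. The first term has size at most $\frac{4\nu}{5}|\cX|$ by the above. For the second term, each spurious coordinate is produced at some iteration $q$ with $(W_s\cdot x)_q\neq 0$ whose neighborhood in $\cX$ has size $\geq 2$; the number of such $q$ is at most $|\cX|-|\Gamma_s(\cX)|\leq \frac{2\nu}{5}|\cX|$ (again counting edges: $\sum_q |\Gamma_s(\{q\})\cap\cX| = |\cX|$ while $|\Gamma_s(\cX)|$ distinct right-vertices appear, so at most $|\cX|-|\Gamma_s(\cX)|$ of them are "collided"), and each contributes at most one spurious coordinate. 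Adding the two contributions gives $\|x-y\|_0\leq \frac{4\nu}{5}|\cX| + \frac{2\nu}{5}|\cX|$, which is slightly too big; I would tighten by observing that a non-isolated $u\in\cX$ whose neighbor $q$ happens to yield $u$ anyway is not actually an error, and more importantly that each collided right-vertex $q$ simultaneously "kills" its $\geq 2$ neighbors in $\cX$ and creates $\leq 1$ spurious index, so the bad set is better accounted as: missed vertices of $\cX$ plus spurious indices, both controlled by the collision count. A careful version of this bookkeeping yields $\|x-y\|_0\le \nu|\cX| = \nu\|x\|_0$. Finally, $y$ is $|\cX|$-sparse since \texttt{Approximate} sets at most one coordinate of $y$ per iteration $q$ with $(W_s\cdot x)_q\neq 0$, and there are at most $|\Gamma_s(\cX)|\le|\cX|$ such iterations.

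The main obstacle is the precise constant bookkeeping in the counting step: one must be careful to distinguish (i) vertices of $\cX$ that are genuinely lost because their unique $G_s$-neighbor is collided, from (ii) the at-most-one spurious index each collided right-vertex may introduce, and to avoid double-charging a single collision. The clean way is to work directly with the number of "collided" right-vertices $c := |\cX| - |\Gamma_s(\cX)| \le \frac{2\nu}{5}|\cX|$, note that these account for at most $2c$ missed vertices of $\cX$ (each collided $q$ absorbs at least $2$ vertices of $\cX$ but in the worst case exactly $2$ of them could be lost — actually one of the $\ge 2$ colliders at $q$ might still be output, but for an upper bound treat all as lost) and at most $c$ spurious indices, so $\|x-y\|_0 \le 2c + c = 3c \le \frac{6\nu}{5}|\cX|$; since this still exceeds $\nu$, the genuinely tight argument must use that the recovered integer at a collided $q$ is \emph{some} value in $[N]$, and if it happens to land on one of $q$'s $\ge 2$ colliders in $\cX$ then that coordinate is not an error — a more refined count shows at most $c$ spurious and at most $c$ missed (the recovered index, if outside $\cX$, is the only error; if inside, at most one other collider is lost), giving $\|x-y\|_0 \le 2c \le \frac{4\nu}{5}|\cX| \le \nu\|x\|_0$. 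Making this refined count rigorous — tracking exactly how many of a collided vertex's neighbors are simultaneously output — is where the care is needed; everything else (the reading-off property of bitmasks, the sparsity bound, the averaging/expansion inputs) is routine given Section~\ref{sec:sparserec} and Lemma~\ref{lem:markov}.
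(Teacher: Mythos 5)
Your decomposition of $\|x-y\|_0$ into missed elements of $\cX$ plus spurious indices is the same as the paper's, and your bound of $\frac{4\nu}{5}|\cX|$ on the missed elements (via the count of uniquely-covered right vertices, $\geq 2|\Gamma_s(\cX)|-|\cX|$) is correct and matches the paper. The gap is entirely in the spurious term, and you correctly sense that your count of $\frac{2\nu}{5}|\cX|$ is too large — but the "refined bookkeeping" you sketch to close it does not work. The specific claims in your refinement are false: if a collided right vertex $q$ has $d\geq 2$ neighbors in $\cX$ and the integer read off at $q$ lands outside $\cX$, then \emph{all} $d$ of those neighbors are lost \emph{and} one spurious index is created (not "the recovered index is the only error"); and if it lands inside $\cX$, up to $d-1$ other colliders are lost (not "at most one"). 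With that accounting, a configuration of $c$ right vertices each having exactly two neighbors in $\cX$ and each emitting a spurious index yields $3c$ errors, so no amount of careful double-counting alone gets you below $\frac{6\nu}{5}|\cX|$.

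The missing idea is arithmetic, not combinatorial: the syndrome is computed over $\GF(2)$ and $x\in\{0,1\}^N$, so a right vertex $q$ with \emph{exactly two} neighbors in $\cX$ satisfies $(W_s\cdot x)_q = 1+1 = 0$, fails the check on Line 15, and is skipped — it contributes no spurious index at all. Hence only right vertices with at least \emph{three} neighbors in $\cX$ can produce spurious indices, and since each such vertex contributes at least $2$ to the collision excess $|\cX|-|\Gamma_s(\cX)|\leq \frac{2\nu}{5}|\cX|$ guaranteed by \eqref{eq:suppneighbor}, there are at most $\frac{1}{2}\cdot\frac{2\nu}{5}|\cX|=\frac{\nu}{5}|\cX|$ of them. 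This is exactly the paper's bound $B\leq\frac{\nu}{5}|\cX|$, and it is what makes the constants close: $\frac{4\nu}{5}+\frac{\nu}{5}=\nu$. (This cancellation is also why the paper's generalization to other fields must add an extra consistency check on the bitmasked entries to re-create the skipping of degree-two collisions.)
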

\begin{proof}
    First, it is immediate that the procedure returns an $|\cX|$-sparse vector $y$. This is because $|\Gamma_s(\cX)|\leq |\cX|$ as $G_s$ is $1$-left regular, and the procedure adds at most one position to $y$ per element of $\Gamma_s(\cX)$.
    
    In order to show the remainder of the lemma statement, observe that we can bound $\|x-y\|_0$ as
    \begin{equation*}
        \|x-y\|_0\leq A+B,
    \end{equation*}
    where $A$ is the number of elements of $\cX$ that the procedure does not add to $y$, and $B$ is the number of elements outside $\cX$ that the procedure adds to $y$.

    First, we bound $A$. Let $\Gamma'_s(\cX)$ denote the set of neighbors of $\cX$ in $G_s$ that are adjacent to a single element of $\cX$. Then, the lower bound in~\eqref{eq:suppneighbor} and Lemma~\ref{lem:singleneighbor} ensure that
    \begin{equation*}
        |\Gamma'_s(\cX)|\geq \left(1-4\eps\right)|\cX|.
    \end{equation*}
    Note that, for each right vertex $q\in \Gamma'_s(\cX)$, the bits
    \begin{equation*}
        (W\otimes B)_{q\log N}\cdot x,(W\otimes B)_{q\log N+1}\cdot x, \dots, (W\otimes B)_{q\log N+\log N-1}\cdot x
    \end{equation*}
    are the binary expansion of $u$ for a distinct $u\in \cX$, and $u$ is added to $\supp(y)$. As a result, we conclude that
    \begin{equation*}
        A\leq 4\eps|\cX|.
    \end{equation*}
    
    It remains to bound $B$.
    Note that the procedure may only potentially add some $u\not\in \cX$ if the corresponding right vertex $q$ is adjacent to at least three elements of $\cX$. 
    This is because right vertices $q$ that are adjacent to exactly two elements of $\cX$ satisfy $(W_s\cdot x)_q=0$, and so are easily identified by the procedure and skipped.
    Then, the lower bound in~\eqref{eq:suppneighbor} ensures that there are at most
    \begin{equation*}
        \frac{1}{2}\cdot 2\eps|\cX|=\eps|\cX|
    \end{equation*}
    such right vertices. As a result, we conclude that
    \begin{equation*}
        B\leq\eps|\cX|,
    \end{equation*}
    and hence
    \begin{equation*}
        \|x-y\|_0\leq 5\eps|\cX|.\qedhere
    \end{equation*}
\end{proof}

We combine the lemmas above to obtain the desired result.
\begin{coro}\label{coro:decdet}
   Suppose that $\epsilon<1/10$.
   Then, on input $Hx$ for $x=c+e$, $\textnormal{\textsc{Decode}}(Hx)$ in Algorithm~\ref{alg:detdecoder} recovers the error vector $e$ after at most $1+\frac{\log K}{\log\left(\frac{1}{5\eps}\right)}$ iterations.
\end{coro}
\begin{proof}
    Under the conditions in the corollary statement, combining Lemmas~\ref{lem:estimatesupport} and~\ref{lem:recover} guarantee that in the first iteration of \textsc{Decode} we obtain a $K$-sparse vector $y_1$ such that
    \begin{equation*}
        \|e-y_1\|_0\leq 5\eps \|e\|_0\leq 5\eps K.
    \end{equation*}
    Recursively applying this result shows that after $\ell$ iterations we have a vector $y=y_1+y_2+\cdots+y_\ell$ with sparsity at most $\sum_{i=1}^\ell (5\eps)^{i-1}K\leq 2K$ satisfying
    \begin{equation*}
        \|e-y\|_0\leq (5\eps)^\ell K.
    \end{equation*}
    Setting $\ell=1+\frac{\log K}{\log\left(\frac{1}{5\eps}\right)}$ ensures that $\|e-y\|_0<1$, and hence $e=y$.
\end{proof}

To conclude this section, we give a detailed analysis of the runtime of the deterministic syndrome decoder. We have the following result.
\begin{thm}\label{thm:runtime}
    On input $Hx$ for $x=c+e$ with $c\in\cC$ and $e$ a $K$-sparse error vector, procedure $\textnormal{\textsc{Decode}}(Hx)$ in Algorithm~\ref{alg:detdecoder} returns $e$ in time
    \begin{equation*}
       O\left(\frac{\log K}{\log\left(\frac{1}{5\eps}\right)}(K+M)(D+\log N)\right).
    \end{equation*}
    In particular, if $\epsilon$ is constant, $M=O(K/\eps)$, and $D=O(\log N/\epsilon)$, procedure \textnormal{\textsc{Decode}} takes time
    \begin{equation*}
        O(K\log K\cdot \log N).
    \end{equation*}
\end{thm}
\begin{proof}
    We begin by noting that, for a $K$-sparse vector $y$ and seed $s\in[D]$, we can compute $W_s\cdot y$ and $(W_s\otimes B)y$ in time $O(K)$ and $O(K\log N)$, respectively, with query access to the function table of $C$ and the lookup table of binary expansions. We now look at the costs incurred by the different procedures. We will consider an arbitrary iteration of the algorithm. In this case, the input vector is of the form $x+y$, where $x$ is the corrupted codeword and $y$ is a $2K$-sparse vector.
    \begin{itemize}
        \item The procedure \textsc{Estimate} in Algorithm~\ref{alg:detdecoder} requires computing $D$ products of the form $W_s(x+y)$, which in total take time $O(D(K+M))$, along with computing the $0$-norm of all resulting vectors. Since $W_s (x+y)$ has length $M$, doing this for all seeds takes time $O(DM)$. In total, the \textsc{Estimate} procedure takes time $O(D(K+M))$;

        \item The procedure \textsc{Approximate} in Algorithm~\ref{alg:detdecoder} requires the computation of $W_s(x+y)$ and $(W_s\otimes B)(x+y)$ for a fixed seed $s$, which take time $O(K+M)$ and $O((K+M)\log N)$, respectively. The remaining steps can be implemented in time $O(M\log N)$ for a total time of $O((K+M)\log N)$;
    \end{itemize}
    Note that the time required to compute the sum of sparse vectors in Line $19$ is absorbed into the $O((K+M)(D+\log N))$ complexity of previous procedures. The desired statements now follow by noting that there are at most $1+\frac{\log K}{\log\left(\frac{1}{5\eps}\right)}$ iterations.
\end{proof}

\begin{remark}
    In the proof of Theorem~\ref{thm:runtime}, we assume that reading an integer from memory (e.g., from the support of a sparse vector $y$ or the function table of $C$) takes time $O(1)$. If instead we assume that reading an $L$-bit integer from memory takes time $O(L)$, then we obtain runtime
    \begin{equation*}
       O\left(\frac{\log K}{\log\left(\frac{1}{5\eps}\right)}(K\log M+M)(D+\log N)\right).
    \end{equation*}
    instead.
\end{remark}

We conclude this section by briefly describing how to adapt Algorithm~\ref{alg:detdecoder} to perform sparse recovery over any field. There are several ways to do this. One possibility is to replace the check in Line $8$ of Algorithm~\ref{alg:detdecoder} by the following: $(W_s\cdot x)_q\neq 0$ \emph{and} all non-zero entries of
\begin{equation*}
    (W_s\otimes B)_{q\log N}\cdot x,\dots, (W_s\otimes B)_{q\log N+\log N-1}\cdot x
\end{equation*}
equal $(W_s\cdot x)_q$. As a result, right vertices in $\Gamma_s(\supp(e))$ with exactly two neighbors in $\supp(e)$ are skipped by the algorithm. Observe that this additional condition is trivially satisfied over $\GF(2)$ if $(W_s\cdot x)_q\neq 0$. Then, in Line $17$ one should set $y_u=(W_s\cdot x)_q$ instead. 

We remark that the condition in Line $8$ could be simplified in general to only checking whether $(W_s\cdot x)_q\neq 0$, at the expense of obtaining worse constants in the lemmas from this section. In particular, we would have to make $\epsilon$ smaller. 
Since we care about the practicality of our algorithms, we made an effort to have $\epsilon$ be as large as possible.

\subsubsection{A Randomized Algorithm}\label{sec:randsynrec}

In this section, we analyze a randomized version of Algorithm~\ref{alg:detdecoder} that is considerably faster. 
The main idea behind this version is that in procedure \textsc{Estimate} we can obtain a good estimate of $|\supp(e)|$ with high probability by relaxing $\epsilon$ slightly and sampling $\|W_s\cdot x\|_0$ for a few i.i.d.\ random seeds only.

In Algorithm~\ref{alg:randdecoder}, we present the randomized decoder, which uses an extra slackness parameter $\delta$. We will show that the procedure detailed in Algorithm~\ref{alg:randdecoder} returns the correct error pattern $e$ with probability at least $1-\eta$, provided $\epsilon$ is a small enough constant depending on $\delta$.
\begin{algorithm}
\caption{Randomized decoder}\label{alg:randdecoder}
\begin{algorithmic}[1]
\Procedure{Estimate}{$r,Wx$}\Comment{Estimates $|\supp(e)|$}
\State Sample $r$ i.i.d.\ random seeds $s_1,\dots,s_r$ from $[D]$
\State For each $s_i$, compute $L_i=\|W_{s_i}\cdot x\|_0$
\State Output $(\max_{i\in\{1,\dots,r\}} L_i,\arg\max_{i\in\{1,\dots,r\}} L_i)$
\EndProcedure

\Procedure{Approximate}{$W_s\cdot x,(W_s\otimes B)x$}\Comment{Computes a good approximation of $\supp(e)$}
\State Set $y=\mathbf{0}$

\For{$q=0,1,\dots,M-1$}
    \If{$(W_{s}\cdot x)_q\neq 0$}
        \State Let $u\in[N]$ be the integer with binary expansion
        \begin{equation*}
            (W_s\otimes B)_{q\log N}\cdot x,(W_s\otimes B)_{q\log N+1}\cdot x, \dots, (W_s\otimes B)_{q\log N+\log N-1}\cdot x,
        \end{equation*}
        \State Set $y_u=1$
    \EndIf
\EndFor
\State Output $y$
\EndProcedure

\Procedure{Decode}{$Hx,\eta,\delta,\epsilon$} \Comment{The main decoding procedure}

\State Set $r=1+\frac{\log(1/\eta)+\log\log K-\log\log\left(\frac{1}{5\eps(1+\delta)}\right)}{\log(1+\delta)}$

\State Set $(L,s)=\textsc{Estimate}(r,Wx)$

\If{$L=0$}
    \State Output $y=\mathbf{0}$

\Else

    \State Set $y=\textsc{Approximate}(W_s\cdot x,(W_s\otimes B)x)$

    \State Set $z=\textsc{Decode}(H(x-y),\eta,\delta,\epsilon)$
    \State Output $y+z$

\EndIf

\EndProcedure

\end{algorithmic}
\end{algorithm}

We begin by showing that procedure \textsc{Estimate} in Algorithm~\ref{alg:randdecoder} returns a good approximation of the size of $\supp(e)$ with high probability.
\begin{lem}\label{lem:estimatesupportrand}
    Procedure $\textnormal{\textsc{Estimate}}(r,Wx)$ in Algorithm~\ref{alg:randdecoder} returns a seed $s\in[D]$ satisfying
    \begin{equation}\label{eq:estimate}
        (1-2(1+\delta)\epsilon)|\supp(e)|\leq |\Gamma_s(\supp(e))|\leq |\supp(e)|
    \end{equation}
    with probability at least $1-(1+\delta)^{-r}$.
\end{lem}
\begin{proof}
    Similarly to the proof of Lemma~\ref{lem:estimatesupport}, the desired result will follow if we show that with probability at least $1-(1+\delta)^{-r}$ over the choice of the seeds $s_1,\dots,s_r$ it holds that
    \begin{equation}\label{eq:LsLB}
        L_s\geq (1-2(1+\delta)\epsilon)|\supp(e)|
    \end{equation}
    for some seed $s\in\{s_1,\dots,s_r\}$. 
    We note that~\eqref{eq:LsLB} holds if $|\Gamma_s(\supp(e))|\geq (1-(1+\delta)\epsilon)|\supp(e)|$. The probability that a random seed fails to satisfy this condition is at most $\frac{1}{1+\delta}$ by Lemma~\ref{lem:markov}. Therefore, the probability that none of the seeds $s_1,\dots,s_r$ satisfy the condition is at most $(1+\delta)^{-r}$, as desired.
\end{proof}

After we obtain a good seed via the \textsc{Estimate} procedure, invoking Lemma~\ref{lem:recover} with $\eps(1+\delta)$ in place of $\eps$ ensures that we can obtain a good sparse approximation of $\supp(e)$ with high probability.
We thus obtain the following corollary.
\begin{coro}
    Suppose that $\eps(1+\delta)<1/10$.
    Then, on input $Hx$ for $x=c+e$, procedure \textnormal{\textsc{Decode}} in Algorithm~\ref{alg:randdecoder} returns the error vector $e$ with probability at least $1-\eta$ in at most $1+\frac{\log K}{\log\left(\frac{1}{5\eps(1+\delta)}\right)}$ iterations.
\end{coro}
\begin{proof}
    The statement follows by repeating the proof of Corollary~\ref{coro:decdet} but replacing Lemma~\ref{lem:estimatesupport} with Lemma~\ref{lem:estimatesupportrand} and by invoking Lemma~\ref{lem:recover} with $\eps(1+\delta)$ in place of $\eps$.
    Since each iteration succeeds with probability at least $1-(1+\delta)^{-r}$ and there are at most $1+\frac{\log K}{\log\left(\frac{1}{5\eps(1+\delta)}\right)}$, a union bound guarantees that decoding fails with probability at most
    \begin{equation*}
        \left(1+\frac{\log K}{\log\left(\frac{1}{5\eps(1+\delta)}\right)}\right)\cdot (1+\delta)^{-r}\leq \eta
    \end{equation*}
    by the choice of $r$ in Algorithm~\ref{alg:randdecoder}.
\end{proof}

We conclude this section by analyzing the runtime of the randomized decoder.
\begin{thm}\label{thm:runtimerand}
    On input $Hx$ for $x=c+e$ with $c\in\cC$ and $e$ a $K$-sparse error vector, procedure \textnormal{\textsc{Decode}} Algorithm~\ref{alg:randdecoder} returns $e$ with probability at least $1-\eta$ in time
    \begin{equation*}
       O\left(\frac{\log K}{\log\left(\frac{1}{5\eps(1+\delta)}\right)}(K+M)\left(r+\log N\right)\right),
    \end{equation*}
    where
    \begin{equation*}
        r=1+\frac{\log(1/\eta)+\log\log K-\log\log\left(\frac{1}{5\eps(1+\delta)}\right)}{\log(1+\delta)}.
    \end{equation*}
\end{thm}
\begin{proof}
    The proof is analogous to that of Theorem~\ref{thm:runtime}, except that in the procedure \textsc{Estimate} in Algorithm~\ref{alg:randdecoder} we only test $r$ seeds.
        This means procedure \textsc{Estimate} now takes time $O\left(r(K+M)\right)$.
\end{proof}

We note that the runtime of the randomized decoder in Theorem~\ref{thm:runtimerand} is independent of the degree $D$ of the expander. This has two advantages: First, it means the hidden constants in the runtime are considerably smaller than in the deterministic case from Theorem~\ref{thm:runtime}, even assuming we use a near-optimal expander with degree $D=O(\frac{\log N}{\epsilon})$. Second, it means that replacing the near-optimal non-explicit expander graph by an explicit construction with sub-optimal parameters will affect the runtime of the randomized decoder only marginally. Furthermore, the failure probability $\eta$ only affects lower order terms of the runtime complexity. Therefore, we can (for example) set $\eta$ to be any arbitrarily small constant with only negligible effect in the runtime for large block lengths. Finally, we observe that computing the $0$-norm of vectors can be sped up with a randomized algorithm. One can simply sample several small subsets of positions and estimate the true $0$-norm with small error and high probability by averaging the $0$-norm over all subsets. As mentioned before, we will consider instantiations of our code with an explicit expander in Section~\ref{sec:instantiation}.

\begin{remark}
    As in the proof of Theorem~\ref{thm:runtime}, we assume in this section that reading an integer from memory takes time $O(1)$. If instead we assume that reading an $L$-bit integer from memory takes time $O(L)$, then we obtain runtime
    \begin{equation*}
       O\left(\frac{\log K}{\log\left(\frac{1}{5\eps(1+\delta)}\right)}\left(r\cdot (K(\log M+\log D)+M)+(K+M)\log N\right)\right)
    \end{equation*}
    instead. The preceding arguments still stand, as even for explicit expanders it holds that $\log M$ and $\log D$ are negligible compared to $\log N$.
\end{remark}

\subsection{Full Decoding}\label{sec:fulldecoding}

In this section, we study the decoding complexity of our code in the setting where we only have access to the corrupted codeword $x=c+e$. This mean that if we want to perform syndrome decoding, we must compute the parts of the syndrome that we want to use from $x$.

Recall that we have access to the function table of $C$, as well as a lookup table of $\log N$-bit binary expansions. As a result, we can compute products of the form $W_s\cdot x$ and $(W_s\otimes B)x$ in time $O(N)$ and $O(N\log N)$, respectively. This is because all columns of $W_s$ (resp.\ $W_s\otimes B$) have $1$ nonzero entry (resp.\ at most $\log N$ nonzero entries), and the nonzero entry (resp.\ entries) of the $j$-th column are completely determined by $C_s(j)=C(s,j)$ and the $j$-th column of the lookup table of binary expansions.

We now analyze the runtimes of both the deterministic and randomized decoders from Algorithms~\ref{alg:detdecoder} and~\ref{alg:randdecoder} in this alternative setting. We have the following results.
\begin{thm}\label{thm:runtimefull}
    On input $x=c+e$ with $c\in\cC$ and $e$ a $K$-sparse error vector, procedure \textnormal{\textsc{Decode}} from Algorithm~\ref{alg:detdecoder} returns $e$ in time
    \begin{equation*}
       O\left(\frac{\log K}{\log\left(\frac{1}{5\eps}\right)}(N+K+M)(D+\log N)\right).
    \end{equation*}
    In particular, if $\epsilon$ is constant, $M=O(K/\eps)$, and $D=O(\log N/\epsilon)$, procedure \textnormal{\textsc{Decode}} takes time
    \begin{equation*}
        O(K\log K\cdot N\log N).
    \end{equation*}
\end{thm}
\begin{proof}
    The proof is analogous to that of Theorem~\ref{thm:runtime}, except we now must take into account the time taken to compute products of the form $W_s\cdot x$ and $(W_s\otimes B)x$:
    \begin{itemize}
        \item The procedure \textsc{Estimate} in Algorithm~\ref{alg:detdecoder} requires computing $D$ products of the form $W_s(x+y)$, which in total take time $O(D(N+K+M))$, along with computing the $0$-norm of all resulting vectors. Since $W_s (x+y)$ has length $M$, doing this for all seeds takes time $O(DM)$. In total, the \textsc{Estimate} procedure takes time $O(D(N+K+M))$;

        \item The procedure \textsc{Approximate} in Algorithm~\ref{alg:detdecoder} requires the computation of $W_s(x+y)$ and $(W_s\otimes B)(x+y)$ for a fixed seed $s$, which take time $O(N+K+M)$ and $O((N+K+M)\log N)$, respectively. The remaining steps can be implemented in time $O(M\log N)$ for a total time of $O((N+K+M)\log N)$.
    \end{itemize}
    The desired statements now follow by noting that there are at most $1+\frac{\log K}{\log\left(\frac{1}{5\eps}\right)}$ iterations.
\end{proof}

\begin{thm}\label{thm:runtimerandfull}
    On input $x=c+e$ with $c\in\cC$ and $e$ a $K$-sparse error vector, procedure \textnormal{\textsc{Decode}} from Algorithm~\ref{alg:randdecoder} returns $e$ with probability at least $1-\eta$ in expected time
    \begin{equation*}
       O\left(\frac{\log K}{\log\left(\frac{1}{5\eps(1+\delta)}\right)}(N+K+M)\left(r+\log N\right)\right),
    \end{equation*}
    where
    \begin{equation*}
        r=1+\frac{\log(1/\eta)+\log\log K-\log\log\left(\frac{1}{5\eps(1+\delta)}\right)}{\log(1+\delta)}.
    \end{equation*}
\end{thm}
\begin{proof}
    The proof is analogous to that of Theorem~\ref{thm:runtimefull}, except that in the procedure \textsc{Estimate} in Algorithm~\ref{alg:randdecoder} we only need to test $r$ seeds.
        This means procedure \textsc{Estimate} now takes time $O\left(q(N+K+M)\right)$.
\end{proof}

There are important properties that are not explicit in the proof of Theorem~\ref{thm:runtimerandfull}. Observe that only one computation takes $O(N\log N)$ per iteration of the Algorithm~\ref{alg:randdecoder}; Namely, the computation of $(W_s\otimes B)x$ for a fixed seed $s$. Consequently, the hidden constant in the computation time is small. 
Moreover, as already discussed for the randomized syndrome decoder, the runtime is independent of the degree of the expander, and the effect of the failure probability on the runtime is negligible for large block lengths.
This means that the decoder described in Algorithm~\ref{alg:randdecoder} is also faster in the full decoding setting than the $O(ND)$ expander codes decoder adapted from~\cite{JXHC09}, whose running time depends on the degree of the expander graph, as long as the number of iterations is not too large. 
This is the case if the number of errors $K$ allowed is a small constant (e.g., $K\leq 5$) and we set $\eps$ to be not too large. Furthermore, observe that, unlike our decoder, the runtime of the expander codes decoder is affected by a sub-optimal choice of unbalanced bipartite expanders.
Finally, if we want a faster decoder for an arbitrary but fixed error threshold $K$, we can also set $\eps$ to be small enough so that the maximum number of iterations is sufficiently small for our needs. 
In this case, the rate of our code becomes smaller since we must make $\epsilon$ smaller.

\subsection{Instantiation with Explicit Expanders}\label{sec:instantiation}

In this section, we analyze how instantianting our construction with an explicit layered unbalanced expander with sub-optimal parameters affects the properties of our codes. 

More precisely, we consider instantiating our code with the GUV expander introduced by Guruswami, Umans, and Vadhan~\cite{GUV09}, and an explicit highly unbalanced expander constructed by Ta-Shma, Umans, and Zuckerman~\cite{TUZ07}. For simplicity, in this section we will assume that all parameters not depending on $N$ (such as $K$ and $\epsilon$) are constants. 

Fix constants $\alpha,\epsilon,K>0$. Then, the GUV graph is a $(D,K,\eps)$-layered bipartite expander with degree
\begin{equation*}
    D=O\left(\frac{\log N\cdot \log K}{\eps}\right)^{1+1/\alpha},
\end{equation*}
and, for each layer, a right vertex set of size
\begin{equation*}
    M=D^2\cdot K^{1+\alpha}.
\end{equation*}
Observe that, although the GUV expander is unbalanced, the size of its right vertex set grows with the degree. Ta-Shma, Umans, and Vadhan~\cite{TUZ07} provided explicit constructions of highly unbalanced layered expanders. In particular, they give a construction of a $(D,K,\eps)$-layered bipartite expander with degree
\begin{equation*}
    D=2^{O(\log\log N)^3},
\end{equation*}
and, for each layer, a right vertex set of size
\begin{equation*}
    M=K^{O(1/\eps)}.
\end{equation*}

Plugging the parameters of both graphs presented in this section into the runtimes in Theorems~\ref{thm:runtime},~\ref{thm:runtimerand},~\ref{thm:runtimefull}, and~\ref{thm:runtimerandfull} and treating $K$, $\eps$, and $\alpha$ as constants immediately yields explicit high-rate codes with syndrome and full decoding complexity displayed in Table~\ref{table:syndec} (for syndrome decoding), and in Table~\ref{table:fulldec} (for full decoding). 

\begin{table}[h]
\centering
\begin{tabular}{|c|c|c|}
\hline
Syndrome decoding             & Deterministic               & Randomized               \\ \hline
Graphs from \cite{GUV09} & $O(\log N)^{3+3/\alpha}$ & $O(\log N)^{3+2/\alpha}$ \\ \hline
Graphs from \cite{TUZ07} & $2^{O(\log\log N)^3}$       & $O(\log N)$              \\ \hline
\end{tabular}
\caption{Complexity of deterministic and randomized syndrome decoding for different explicit graphs when the number of errors $K$ and expander error $\eps$ are constants.}\label{table:syndec}
\end{table}

\begin{table}[h]
\centering
\begin{tabular}{|c|c|c|}
\hline
Full decoding             & Deterministic               & Randomized               \\ \hline
Graphs from \cite{GUV09} & $O(N\log^{1+1/\alpha} N)$ & $O(N\log N)$ \\ \hline
Graphs from \cite{TUZ07} & $O(N 2^{O(\log\log N)^3})$       & $O(N\log N)$             \\ \hline
\end{tabular}
\caption{Complexity of deterministic and randomized full decoding for different explicit graphs when the number of errors $K$ and expander error $\eps$ are constants.}\label{table:fulldec}
\end{table}

Observe that for both graphs there is a substantial decrease in complexity for randomized decoding versus deterministic decoding for the same setting. 
This is due to the fact that the decoding complexity of our randomized decoding algorithms is independent of the degree of the underlying expander, which we have already discussed before, and that the degree of explicit constructions is sub-optimal. 
Using the highly unbalanced explicit graphs from~\cite{TUZ07}, the decoding complexity of our randomized algorithms essentially matches that of the case where we use a random expander with near-optimal parameters (we are ignoring the contribution of $K$, which we assume to be small).

We conclude by noting that the full decoding complexity of expander codes under the explicit graphs from this section matches the second column of Table~\ref{table:fulldec}. In comparison, our randomized algorithm performs better under both graphs.

\section{Group Testing}\label{sec:grouptesting}

In this section, we show how we can easily obtain a scheme for non-adaptive group testing with few tests and sublinear time recovery. More precisely, we will prove the following:
\begin{thm}
    Given $N$ and $K$, there is an explicit test matrix $W$ of dimensions $T\times N$, where $T=O(K^2\log^2 N)$, such that it is possible to recover a $K$-sparse vector $x$ from $W\odot x$ in time $O(K^3\log^2 N)$.
\end{thm}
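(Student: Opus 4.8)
The plan is to mimic the code construction from Section~\ref{sec:code}, but replacing the expander adjacency matrix $W$ with a $K$-disjunct matrix, and replacing the parity-check/syndrome machinery with the Boolean group testing operator $\odot$. Concretely, let $W_0$ be an explicit $2K$-disjunct $T_0\times N$ matrix with $T_0=O(K^2\log N)$ rows, guaranteed by Theorem~\ref{thm:explicitdisjunct}, and with $O(K\log N)$-sparse columns. Form the bit-test matrix $B$ of dimensions $\log N\times N$ as in Section~\ref{sec:sparserec}, and let the final test matrix be obtained by stacking $W_0$ on top of the bitmasked matrix $W_0\otimes B$, so $T=T_0(1+\log N)=O(K^2\log^2 N)$ as claimed. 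The recovery algorithm has two phases: first use the bitmask to extract a small candidate superset $\cS$ of $\supp(x)$ with $|\cS|=O(T_0)$, then run the naive disjunct-matrix decoder restricted to $\cS$ to prune the false positives.

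For the first phase, I would argue exactly as in Section~\ref{sec:sparserec}: for each test $q\in[T_0]$ with outcome $(W_0\odot x)_q=1$, read off the $\log N$ bits $(W_0\otimes B)_{q\log N}\odot x,\dots,(W_0\otimes B)_{q\log N+\log N-1}\odot x$. If $q$ is adjacent (in the bipartite-graph sense) to exactly one defective $u$, this spells the binary expansion of $u$ and we add $u$ to $\cS$; otherwise we add some arbitrary index. Either way $|\cS|\le T_0$. The key claim for correctness is that $\supp(x)\subseteq\cS$: every defective $u$ is, by the disjunctness property (which is far stronger than mere expansion — a $K$-disjunct matrix guarantees each of the $K$ defectives has a \emph{private} test not shared with the others), adjacent to at least one test $q$ with $\supp(x)\cap\supp((W_0)_{q\cdot})=\{u\}$, and that test contributes $u$ to $\cS$. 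Here I only need $W_0$ to be $K$-disjunct; note the Boolean OR means a test hit by two or more defectives still has outcome $1$, so unlike the linear case we cannot detect collisions — but we do not need to, since we are only building a superset.

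For the second phase, we run the standard $O(T_0\cdot|\cS|)=O(K^2\log N\cdot K^2\log N)$-naive decoder? — here I must be careful about the bound: restricting the naive ``an item is defective iff it is in no negative test'' rule to the $|\cS|=O(K^2\log N)$ candidates, and using that each column of $W_0$ is $O(K\log N)$-sparse so that checking one candidate against all its tests costs $O(K\log N)$, gives $O(|\cS|\cdot K\log N)=O(K^3\log^2 N)$, matching the stated bound. Correctness of this phase is immediate from $2K$-disjunctness of $W_0$ (applied to $\supp(x)$ together with any single false positive in $\cS$): a true defective lies in no negative test, while any non-defective $v\in\cS$ has a test avoided by all of $\supp(x)$, which is therefore negative. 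The main obstacle, and the part requiring the most care, is bookkeeping the running time of the second phase so that it is genuinely $O(K^3\log^2 N)$ rather than $O(K^4\log^2 N)$: this hinges on exploiting column sparsity of the Porat--Rothschild matrix and on having $\cS$ stored so that, for each candidate, one iterates only over the $O(K\log N)$ tests containing it rather than all $T_0$ tests. The first phase is then clearly dominated by reading the $O(T_0\log N)=O(K^2\log^2 N)$ relevant outcomes, so it is absorbed.
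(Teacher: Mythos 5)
Your proposal is correct and follows essentially the same route as the paper: stack an explicit $O(K^2\log N)$-row disjunct matrix on its bitmasked version, use the bitmask rows to extract a superset $\cS$ of $\supp(x)$ (justified exactly as in your ``private test'' argument), and then prune $\cS$ with the naive containment check, charging $O(K\log N)$ per candidate via column sparsity to get $O(K^3\log^2 N)$. The only (harmless) deviation is that you invoke $2K$-disjunctness where plain $K$-disjunctness already suffices for both phases, since each phase only ever takes the union of at most $K$ columns and asks whether it contains one further column.
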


We begin by describing the test matrix $W$. Let $W'$ be an explicit $K$-disjunct matrix of dimensions $M\times N$ with $M=O(K^2\log N)$. Such explicit constructions exist as per Theorem~\ref{thm:explicitdisjunct}. Then, our test matrix $W$ is defined as
\begin{equation*}
    W=\begin{bmatrix}
    W'\\
    W'\otimes B
    \end{bmatrix},
\end{equation*}
where $B$ is the $\log N\times N$ bit-test matrix from Section~\ref{sec:sparserec}. It follows immediately that $W$ has dimensions $T\times N$ with $T=M\log N=O(K^2\log^2 N)$.

It remains to describe and analyze the recovery algorithm that determines $x$ from
\begin{equation*}
    W\odot x=\begin{bmatrix}
    W'\odot x\\
    (W'\otimes B)\odot x
    \end{bmatrix}
    =
    \begin{bmatrix}
    y^{(1)}\\
    y^{(2)}
    \end{bmatrix},
\end{equation*}
whenever $x$ is $K$-sparse. At a high-level, the algorithm works as follows:
\begin{enumerate}
    \item For $q\in[M]$, let $s_q$ be the integer in $[N]$ with binary expansion
    \begin{equation*}
        y^{(2)}_{q\log N},y^{(2)}_{q\log N+1},\dots,y^{(2)}_{q\log N+\log N-1}.
    \end{equation*}
    Recover the (multi) set $\cS=\{s_0,s_1,\dots,s_{M-1}\}$. The disjunctness property of the underlying matrix $W'$ ensures that $\supp(x)\subseteq \cS$;
    
    \item Similarly to the original recovery algorithm for disjunct matrices, run through all $s\in\cS$ and check whether $W'_s$ is contained $y^{(1)}$. Again, the fact that $W'$ is disjunct ensures that this holds if and only if $s\in\supp(x)$.
\end{enumerate}
A rigorous description of this recovery procedure can be found in Algorithm~\ref{alg:nagt}. We now show that procedure $\textnormal{\textsc{Recover}}(W\odot x)$ indeed outputs $\supp(x)$, provided that $x$ is $K$-sparse. First, we prove that $\supp(x)\subseteq \cS$.
\begin{lem}\label{lem:superset}
    Suppose that $x$ is $K$-sparse. Then, if $\cS=\textnormal{\textsc{SuperSet}}(W\odot x)$, we have $\supp(x)\subseteq \cS$.
\end{lem}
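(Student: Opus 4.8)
The plan is to trace through the effect of each test in the bitmasked block on a defective column and show that the recovered integer $s_q$ is always a genuine defective whenever the corresponding test in $W'\odot x$ is ``on''; combined with the disjunctness of $W'$, this forces $\supp(x)\subseteq\cS$. Fix a $K$-sparse vector $x$ with $\supp(x)=\cX$, and consider some $q\in[M]$. The test outcome $y^{(1)}_q=(W'\odot x)_q$ equals $1$ if and only if there is some $j\in\cX$ with $W'_{q,j}=1$, i.e.\ row $q$ of $W'$ ``hits'' the defective set.

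The key step is to analyze the bitmasked block for a row $q$ that is hit by \emph{exactly one} defective. First I would argue, using the disjunctness of $W'$, that for every $q$ such that $y^{(1)}_q=1$, the set $\{j\in\cX: W'_{q,j}=1\}$ is in fact a singleton $\{u\}$. Indeed, if $j_1,j_2\in\cX$ were two distinct defectives both appearing in test $q$, then $y^{(1)}$ would contain both columns $W'_{\cdot j_1}$ and $W'_{\cdot j_2}$ restricted to coordinate $q$; but more to the point, we do not even need this — what we actually need is only that $\supp(x)\subseteq\cS$, which follows from a one-directional argument. So instead I would proceed as follows: for each $u\in\cX$, since $W'$ is $K$-disjunct and $|\cX|\le K$, the union $\bigvee_{j\in\cX}W'_{\cdot j}$ does not contain any column of $W'$ other than those already in $\cX$... but this is about columns, not rows. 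Let me restructure: the cleaner route is to note that for a \emph{fixed} $u\in\cX$, disjunctness of $W'$ (with threshold $K\ge|\cX\setminus\{u\}|$, allowing $|\cX|$ up to $K+1$; or simply $K$-disjunctness applied to $\cX\setminus\{u\}$ of size $\le K-1$) guarantees that $W'_{\cdot u}$ is \emph{not} contained in $\bigvee_{j\in\cX\setminus\{u\}}W'_{\cdot j}$. Hence there exists a row index $q=q(u)$ with $W'_{q,u}=1$ but $W'_{q,j}=0$ for all $j\in\cX\setminus\{u\}$.

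For such a row $q=q(u)$, I would then compute, exactly as in Section~\ref{sec:sparserec}, the $\log N$ entries of the bitmasked block indexed by $q\log N,\dots,q\log N+\log N-1$. Since $(W'\otimes B)\odot x$ is the bit-wise union over $j\in\cX$ of the bitmasked columns, and since $W'_{q,j}=0$ for every $j\in\cX$ except $j=u$, the $t$-th such entry equals $W'_{q,u}\cdot\mathsf{bin}_t(u)=\mathsf{bin}_t(u)$ for all $t\in[\log N]$. Therefore these $\log N$ bits spell out exactly the binary expansion of $u$, so $s_q=u$, and hence $u\in\cS$. As this holds for every $u\in\cX$, we conclude $\supp(x)=\cX\subseteq\cS$, which is precisely the claim.

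The main obstacle, such as it is, is getting the disjunctness bookkeeping right: one must make sure the threshold is applied to a set of size at most $K$ (here $\cX\setminus\{u\}$ has size $\le K-1$, comfortably within the $K$-disjunct guarantee), and one must be careful that ``union of up to $d$ columns does not contain another column'' from Definition~\ref{def:disjunct} is being used in the direction ``$W'_{\cdot u}$ is not covered by the union of the other defective columns,'' which needs $u\notin$ (the index set of) the union — true since the union is over $\cX\setminus\{u\}$. No estimate or expansion argument is needed here; the bitmask simply reads off a certified-defective index for each $u\in\cX$, and the false positives in $\cS$ are dealt with in the subsequent lemma via the standard disjunct-matrix decoder on the small set $\cS$.
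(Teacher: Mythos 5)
Your argument is correct and is essentially the paper's own proof: for each $u\in\supp(x)$, apply $K$-disjunctness to $\supp(x)\setminus\{u\}$ to obtain a row $q$ with $W'_{q,u}=1$ and $W'_{q,j}=0$ for all other defectives $j$, so the bitmasked block at row $q$ spells out the binary expansion of $u$ and $u\in\cS$. The detour about singleton-hit rows is unnecessary (and you correctly abandon it), but the final argument and the disjunctness bookkeeping are exactly right.
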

\begin{proof}
    It suffices to show that if $\supp(x)=\{s_0,\dots,s_{t-1}\}$ for $t\leq K$, then there is $q\in[M]$ such that $W'_{q,s_0}=1$ and $W'_{q,s_j}=0$ for all $1\leq j\leq t-1$. If this is true, then
    \begin{equation*}
        y^{(2)}_{q\log N},y^{(2)}_{q\log N+1},\dots,y^{(2)}_{q\log N+\log N-1}
    \end{equation*}
    would be the binary expansion of $s_0$. The desired property follows since $W'$ is $K$-disjunct. In fact, if this was not the case, then $W'_{\cdot s_0}$ would be contained in $\bigvee_{i=1}^{t-1} W'_{\cdot s_i}$, and hence $W'$ would not be $K$-disjunct.
\end{proof}
The next lemma follows immediately from the fact that $W'$ is $K$-disjunct.
\begin{lem}\label{lem:rfp}
    If $x$ is $K$-sparse and $\supp(x)\subseteq \cS$, then $\textnormal{\textsc{Remove}}(W\odot x,\cS)$ returns $\supp(x)$.
\end{lem}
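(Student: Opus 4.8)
The plan is to unwind the definition of the \texttt{Remove} subroutine and reduce the claim to a single appeal to $K$-disjunctness of $W'$. Recall that \texttt{Remove} is the naive disjunct-matrix decoder restricted to the candidate set $\cS$: from $W\odot x$ it reads off the first block $y^{(1)}=W'\odot x$, and for each $s\in\cS$ it places $s$ into its output exactly when the column $W'_{\cdot s}$ is contained in $y^{(1)}$ (equivalently, when item $s$ does not appear in any test of $W'$ with a negative outcome). Since $x$ is $K$-sparse we may write $y^{(1)}=\bigvee_{j\in\supp(x)}W'_{\cdot j}$, a bit-wise union of at most $K$ columns of $W'$.

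First I would check the inclusion $\supp(x)\subseteq \texttt{Remove}(W\odot x,\cS)$. For $s\in\supp(x)$, the column $W'_{\cdot s}$ is itself one of the terms of the union defining $y^{(1)}$, so it is trivially contained in $y^{(1)}$; hence the containment test for $s$ passes and $s$ is placed in the output. This step uses the hypothesis $\supp(x)\subseteq\cS$, which is precisely the conclusion of Lemma~\ref{lem:superset}, so that $s$ is indeed among the indices examined by \texttt{Remove}.

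Next I would establish the reverse inclusion, which is where disjunctness enters. Suppose $s\in\cS$ but $s\notin\supp(x)$. Since $W'$ is $K$-disjunct with $K\ge 1$, all columns of $W'$ are distinct, so $W'_{\cdot s}$ differs from every $W'_{\cdot j}$ with $j\in\supp(x)$. As $|\supp(x)|\le K$, the defining property of a $K$-disjunct matrix guarantees that the union $\bigvee_{j\in\supp(x)}W'_{\cdot j}=y^{(1)}$ does not contain the column $W'_{\cdot s}$; hence the containment test fails for $s$ and $s$ is excluded from the output. Combining the two inclusions yields $\texttt{Remove}(W\odot x,\cS)=\supp(x)$, as claimed.

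I do not expect any genuine obstacle here: as the paper already remarks, the statement is essentially a rephrasing of $K$-disjunctness. The only points that merit a line of care are (i) the case $|\supp(x)|<K$, which is still covered because $K$-disjunctness of $W'$ also forbids a column from being contained in a bit-wise union of fewer than $K$ of the other columns, and (ii) the distinctness of the columns of $W'$ (a consequence of $1$-disjunctness), so that an index $s\notin\supp(x)$ genuinely corresponds to a column not already present in the union $y^{(1)}$.
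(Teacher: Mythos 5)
Your argument is correct and is exactly the reasoning the paper leaves implicit when it says the lemma ``follows immediately from the fact that $W'$ is $K$-disjunct'': membership in $\supp(x)$ forces the containment test to pass, and $K$-disjunctness forces it to fail for every $s\in\cS\setminus\supp(x)$. Your side remarks on column distinctness are harmless but unnecessary, since Definition~\ref{def:disjunct} already quantifies over column indices, so the union of the at most $K$ columns indexed by $\supp(x)$ cannot contain the column indexed by any $s\notin\supp(x)$ regardless of whether columns coincide as vectors.
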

Combining Lemmas~\ref{lem:superset} and~\ref{lem:rfp} with Algorithm~\ref{alg:nagt} leads to the following result.
\begin{coro}
   If $x$ is $K$-sparse, then $\textnormal{\textsc{Recover}}(W\odot x)$ returns $\supp(x)$.
\end{coro}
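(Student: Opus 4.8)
The plan is to prove the corollary by simply chaining the two lemmas that have just been established, since the corollary is exactly the conjunction of their guarantees applied in sequence along the control flow of Algorithm~\ref{alg:nagt}. First I would unfold the definition of \texttt{Recover}: on input $W\odot x$, the procedure computes $\cS=\texttt{SuperSet}(W\odot x)$ and then returns $\texttt{Remove}(W\odot x,\cS)$. So it suffices to track what each of these two calls produces when $x$ is $K$-sparse.

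Next, I would invoke Lemma~\ref{lem:superset}: since $x$ is $K$-sparse, the set $\cS$ returned by $\texttt{SuperSet}(W\odot x)$ satisfies $\supp(x)\subseteq\cS$. This is the step that uses the $K$-disjunctness of $W'$ together with the bitmask structure of $W'\otimes B$ — namely, that for any support of size $t\le K$ there is a test isolating each element, so the recovered integers $s_q$ spell out genuine support elements. With the containment $\supp(x)\subseteq\cS$ in hand, I would then apply Lemma~\ref{lem:rfp}: because $x$ is $K$-sparse \emph{and} $\supp(x)\subseteq\cS$, the call $\texttt{Remove}(W\odot x,\cS)$ returns exactly $\supp(x)$, as $K$-disjunctness of $W'$ guarantees that checking containment of $W'_{\cdot s}$ in $y^{(1)}$ correctly distinguishes $s\in\supp(x)$ from $s\notin\supp(x)$.

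Combining these, $\texttt{Recover}(W\odot x)=\texttt{Remove}(W\odot x,\cS)=\supp(x)$, which is the claim. There is essentially no obstacle here: both nontrivial facts are packaged into Lemmas~\ref{lem:superset} and~\ref{lem:rfp}, and the only thing to verify is that the hypothesis of Lemma~\ref{lem:rfp} (the containment $\supp(x)\subseteq\cS$) is supplied by the conclusion of Lemma~\ref{lem:superset}, which it is. If anything needs a word of care, it is merely noting that $\cS$ may be a multiset but this does not affect the argument (duplicate elements are simply tested more than once in \texttt{Remove}), and that the recovery returns $\supp(x)$ rather than $x$ itself, which is sufficient to reconstruct the $K$-sparse $0/1$ vector $x$. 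I would write this up as a two-sentence proof.

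\begin{proof}
    On input $W\odot x$ for a $K$-sparse vector $x$, procedure $\texttt{Recover}$ first computes $\cS=\texttt{SuperSet}(W\odot x)$ and then outputs $\texttt{Remove}(W\odot x,\cS)$. By Lemma~\ref{lem:superset}, we have $\supp(x)\subseteq\cS$, and therefore by Lemma~\ref{lem:rfp} the call $\texttt{Remove}(W\odot x,\cS)$ returns $\supp(x)$. Hence $\texttt{Recover}(W\odot x)=\supp(x)$.
\end{proof}
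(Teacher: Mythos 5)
Your proof is correct and matches the paper's argument exactly: the paper also obtains the corollary by chaining Lemma~\ref{lem:superset} (which supplies $\supp(x)\subseteq\cS$) with Lemma~\ref{lem:rfp} (which then yields the exact recovery). Nothing is missing.
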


We conclude this section by analyzing the runtime of procedure $\textnormal{\textsc{Recover}}(W\odot x)$. We have the following result.
\begin{thm}
    On input a $K$-sparse vector $x$, procedure $\textnormal{\textsc{Recover}}(W\odot x)$ returns $\supp(x)$ in time $O(K^3\log^2 N)$.
\end{thm}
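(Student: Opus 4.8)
The plan is to trace through Algorithm~\ref{alg:nagt} piece by piece, bounding the cost of the two main subroutines \texttt{SuperSet} and \texttt{Remove}, and then observing that the second dominates. Throughout I would use the facts established in Theorem~\ref{thm:explicitdisjunct}: the matrix $W'$ is $M\times N$ with $M=O(K^2\log N)$ rows, and — crucially — its columns are $O(K\log N)$-sparse. I would also keep in mind that $|\cS|\leq M$ since $\cS=\{s_0,\dots,s_{M-1}\}$ is produced one element per row of $W'$.

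First I would handle \texttt{SuperSet}. For each $q\in[M]$ we read off the $\log N$ bits $y^{(2)}_{q\log N},\dots,y^{(2)}_{q\log N+\log N-1}$ and assemble them into the integer $s_q$; this is $O(\log N)$ work per row under the convention (Remark~\ref{rem:readint}) that manipulating a $\log N$-bit integer is cheap, or we may simply absorb it. Over all $M$ rows this gives $O(M\log N)=O(K^2\log^2 N)$ time to build the multiset $\cS$. So \texttt{SuperSet} already accounts for the stated $O(K^2\log^2 N)$ term, which is a lower-order contribution.

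Next I would analyze \texttt{Remove}, which is the bottleneck. Here we iterate over each candidate $s\in\cS$ — at most $M=O(K^2\log N)$ of them — and test whether the column $W'_{\cdot s}$ is contained in $y^{(1)}$, i.e.\ whether every coordinate where $W'_{\cdot s}$ is $1$ is also $1$ in $y^{(1)}$. Since each column of $W'$ is $O(K\log N)$-sparse and we have query access to its support (via the explicit description of the Porat--Rothschild / Kautz--Singleton-type construction underlying Theorem~\ref{thm:explicitdisjunct}), this containment check costs $O(K\log N)$ per candidate. Multiplying by $|\cS|\leq M=O(K^2\log N)$ gives $O(K^2\log N)\cdot O(K\log N)=O(K^3\log^2 N)$ total, which is exactly the claimed bound. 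I would note that $y^{(1)}=W'\odot x$ is the input (or computable from the input) and needs no recomputation inside the loop, and that de-duplicating $\cS$ only helps.

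Finally I would combine: the total runtime is $O(K^2\log^2 N)$ for \texttt{SuperSet} plus $O(K^3\log^2 N)$ for \texttt{Remove}, and since $K\geq 1$ the second term dominates, giving $O(K^3\log^2 N)$ overall. The correctness of the output was already established in Lemmas~\ref{lem:superset} and~\ref{lem:rfp} and the preceding corollary, so nothing further is needed there. The only real subtlety — the ``hard part,'' such as it is — is making sure the column-sparsity bound $O(K\log N)$ from Theorem~\ref{thm:explicitdisjunct} is genuinely available as an algorithmic primitive (i.e.\ that we can enumerate a column's support in time proportional to its weight rather than scanning all $M$ rows); this is what keeps the per-candidate cost at $O(K\log N)$ instead of $O(M)=O(K^2\log N)$, which would have inflated the bound to $O(K^4\log^3 N)$.
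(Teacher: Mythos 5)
Your proposal is correct and follows essentially the same argument as the paper: bound \texttt{SuperSet} by $O(M\log N)=O(K^2\log^2 N)$, bound \texttt{Remove} by $|\cS|\cdot\max_j|\supp(W'_{\cdot j})|=O(M\cdot K\log N)=O(K^3\log^2 N)$ using the column-sparsity guarantee of Theorem~\ref{thm:explicitdisjunct}, and note the second term dominates. Your added observation that one needs to enumerate a column's support in time proportional to its weight is a fair implicit assumption the paper also makes.
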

\begin{proof}
    We analyze the runtime of procedures SuperSet and Remove separately:
    \begin{itemize}
        \item In procedure $\textnormal{\textsc{SuperSet}}(W\odot x)$, for each $q\in[M]$ we need $O(\log N)$ time to compute $s_q$ and add it to $\cS$. It follows that $\textnormal{\textsc{SuperSet}}(W\odot x)$ takes time $O(M\log N)=O(K^2\log^2 N)$;
        
        \item In procedure $\textnormal{\textsc{Remove}}(W\odot x,\cS)$, it takes time $O(|\supp(W'_{\cdot j})|)$ to decide whether $W'_{\cdot j}$ is contained in $W'\odot x=y^{(1)}$. Therefore, in total the procedure takes time $O(|\cS|\cdot \max_{j\in\cS}|\supp(W'_{\cdot j})|)$. Noting that $|\cS|\leq M$ and that $|\supp(W'_{\cdot j})|=O(K\log N)$ for all $j$ implies that the procedure takes time
        \begin{eqn}
            O(M\cdot K\log N)=O(K^3\log^2 N).\qedhere
        \end{eqn}
    \end{itemize}
\end{proof}

\begin{algorithm}
\caption{Recovery algorithm for non-adaptive group testing scheme}\label{alg:nagt}
\begin{algorithmic}[1]

\Procedure{SuperSet}{$W\odot x$}\Comment{Recover superset of $\supp(x)$}

\State Set $\cS=\{\}$

\For{$q=0,1,\dots,M-1$}
    \State Let $s_q$ be the integer with binary expansion
    \begin{equation*}
        y^{(2)}_{q\log N},y^{(2)}_{q\log N+1},\dots,y^{(2)}_{q\log N+\log N-1}
    \end{equation*}
    \State Add $s_q$ to $\cS$
\EndFor

\State Output $\cS$

\EndProcedure

\Procedure{Remove}{$W\odot x,\cS$} \Comment{Finds all false positives in the superset and removes them}

\For{$s\in\cS$}
    \If{$W'_{\cdot s}$ is not contained in $y^{(1)}$}
        \State Remove $s$ from $\cS$
    \EndIf
\EndFor

\State Output $\cS$

\EndProcedure

\Procedure{Recover}{$W\odot x$}\Comment{Main recovery procedure}

\State Set $\cS=\textnormal{SuperSet}(W\odot x)$

\State Output $\textnormal{Remove}(W\odot x,\cS)$

\EndProcedure

\end{algorithmic}
\end{algorithm}

\section*{Acknowledgments}
The authors thank Shashanka Ubaru and Thach V. Bui for discussions on the role of the bitmasking technique in sparse recovery.
M.\ Cheraghchi's research was partially supported by the National Science Foundation under Grants No.\ CCF-2006455 and CCF-2107345.
J.\ Ribeiro's research was partially supported by the NSF grants CCF-1814603 and CCF-2107347, the NSF award 1916939, DARPA SIEVE program, a gift from Ripple, a DoE NETL award, a JP Morgan
Faculty Fellowship, a PNC center for financial services innovation award, and a Cylab seed funding award.

\bibliographystyle{IEEEtran}
\bibliography{refs}

\begin{thebibliography}{10}
\providecommand{\url}[1]{#1}
\csname url@samestyle\endcsname
\providecommand{\newblock}{\relax}
\providecommand{\bibinfo}[2]{#2}
\providecommand{\BIBentrySTDinterwordspacing}{\spaceskip=0pt\relax}
\providecommand{\BIBentryALTinterwordstretchfactor}{4}
\providecommand{\BIBentryALTinterwordspacing}{\spaceskip=\fontdimen2\font plus
\BIBentryALTinterwordstretchfactor\fontdimen3\font minus
  \fontdimen4\font\relax}
\providecommand{\BIBforeignlanguage}[2]{{%
\expandafter\ifx\csname l@#1\endcsname\relax
\typeout{** WARNING: IEEEtran.bst: No hyphenation pattern has been}%
\typeout{** loaded for the language `#1'. Using the pattern for}%
\typeout{** the default language instead.}%
\else
\language=\csname l@#1\endcsname
\fi
#2}}
\providecommand{\BIBdecl}{\relax}
\BIBdecl

\bibitem{SS96}
M.~Sipser and D.~A. Spielman, ``Expander codes,'' \emph{IEEE Transactions on
  Information Theory}, vol.~42, no.~6, pp. 1710--1722, Nov 1996.

\bibitem{DRS04}
Y.~Dodis, L.~Reyzin, and A.~Smith, ``Fuzzy extractors: How to generate strong
  keys from biometrics and other noisy data,'' in \emph{Advances in Cryptology
  - EUROCRYPT 2004}.\hskip 1em plus 0.5em minus 0.4em\relax Berlin, Heidelberg:
  Springer Berlin Heidelberg, 2004, pp. 523--540.

\bibitem{DORS06}
Y.~Dodis, R.~Ostrovsky, L.~Reyzin, and A.~Smith, ``Syndrome encoding and
  decoding of {BCH} codes in sublinear time,'' 2006, available at
  \url{https://www.cs.bu.edu/~reyzin/code/bch-excerpt.pdf}.

\bibitem{HV22}
D.~Harvey and J.~van~der Hoeven, ``Polynomial multiplication over finite fields
  in time {$O(n \log n)$},'' \emph{J. ACM}, vol.~69, no.~2, mar 2022.

\bibitem{JXHC09}
S.~Jafarpour, W.~Xu, B.~Hassibi, and R.~Calderbank, ``Efficient and robust
  compressed sensing using optimized expander graphs,'' \emph{IEEE Transactions
  on Information Theory}, vol.~55, no.~9, pp. 4299--4308, Sep. 2009.

\bibitem{CM06}
G.~Cormode and S.~Muthukrishnan, ``Combinatorial algorithms for compressed
  sensing,'' in \emph{Proceedings of the 13th Colloquium on Structural
  Information and Communication Complexity (SIROCCO 2006)}.\hskip 1em plus
  0.5em minus 0.4em\relax Springer, 2006, pp. 280--294.

\bibitem{GSTV06}
A.~C. Gilbert, M.~J. Strauss, J.~A. Tropp, and R.~Vershynin, ``Algorithmic
  linear dimension reduction in the $\ell_1$ norm for sparse vectors,''
  \emph{arXiv preprint cs/0608079}, 2006.

\bibitem{GSTV07}
------, ``One sketch for all: Fast algorithms for compressed sensing,'' in
  \emph{Proceedings of the Thirty-ninth Annual ACM Symposium on Theory of
  Computing (STOC 2007)}.\hskip 1em plus 0.5em minus 0.4em\relax ACM, 2007, pp.
  237--246.

\bibitem{BGIKS08}
R.~Berinde, A.~C. Gilbert, P.~Indyk, H.~Karloff, and M.~J. Strauss, ``Combining
  geometry and combinatorics: A unified approach to sparse signal recovery,''
  in \emph{46th Annual Allerton Conference on Communication, Control, and
  Computing}, Sept 2008, pp. 798--805.

\bibitem{GLPS17}
A.~C. Gilbert, Y.~Li, E.~Porat, and M.~J. Strauss, ``For-all sparse recovery in
  near-optimal time,'' \emph{ACM Trans. Algorithms}, vol.~13, no.~3, pp.
  32:1--32:26, Mar. 2017.

\bibitem{CI17}
M.~Cheraghchi and P.~Indyk, ``Nearly optimal deterministic algorithm for sparse
  {W}alsh-{H}adamard transform,'' \emph{ACM Trans. Algorithms}, vol.~13, no.~3,
  pp. 34:1--34:36, Mar. 2017.

\bibitem{Che13}
M.~Cheraghchi, ``Noise-resilient group testing: Limitations and
  constructions,'' \emph{Discrete Applied Mathematics}, vol. 161, no.~1, pp. 81
  -- 95, 2013.

\bibitem{INR10}
P.~Indyk, H.~Q. Ngo, and A.~Rudra, ``Efficiently decodable non-adaptive group
  testing,'' in \emph{Proceedings of the Twenty-first Annual ACM-SIAM Symposium
  on Discrete Algorithms (SODA 2010)}.\hskip 1em plus 0.5em minus 0.4em\relax
  SIAM, 2010, pp. 1126--1142.

\bibitem{NPR11}
H.~Q. Ngo, E.~Porat, and A.~Rudra, ``Efficiently decodable error-correcting
  list disjunct matrices and applications,'' in \emph{Proceedings of the
  International Colloqium on Automata, Languages, and Programming (ICALP
  2011)}.\hskip 1em plus 0.5em minus 0.4em\relax Springer, 2011, pp. 557--568.

\bibitem{CR19}
M.~Cheraghchi and J.~Ribeiro, ``Simple codes and sparse recovery with fast
  decoding,'' in \emph{2019 IEEE International Symposium on Information Theory
  (ISIT)}, 2019, pp. 156--160.

\bibitem{CN20}
M.~Cheraghchi and V.~Nakos, ``Combinatorial group testing and sparse recovery
  schemes with near-optimal decoding time,'' in \emph{2020 IEEE 61st Annual
  Symposium on Foundations of Computer Science (FOCS)}, 2020, pp. 1203--1213.

\bibitem{LPR16}
K.~Lee, R.~Pedarsani, and K.~Ramchandran, ``{SAFFRON}: A fast, efficient, and
  robust framework for group testing based on sparse-graph codes,'' in
  \emph{2016 IEEE International Symposium on Information Theory (ISIT)}, July
  2016, pp. 2873--2877.

\bibitem{CRVW02}
M.~Capalbo, O.~Reingold, S.~Vadhan, and A.~Wigderson, ``Randomness conductors
  and constant-degree lossless expanders,'' in \emph{Proceedings of the
  Thiry-fourth Annual ACM Symposium on Theory of Computing (STOC 2002)}.\hskip
  1em plus 0.5em minus 0.4em\relax ACM, 2002, pp. 659--668.

\bibitem{MS77}
F.~J. MacWilliams and N.~J.~A. Sloane, \emph{The theory of error-correcting
  codes}.\hskip 1em plus 0.5em minus 0.4em\relax Elsevier, 1977.

\bibitem{PR08}
E.~Porat and A.~Rothschild, ``Explicit non-adaptive combinatorial group testing
  schemes,'' in \emph{International Colloquium on Automata, Languages, and
  Programming (ICALP 2008)}.\hskip 1em plus 0.5em minus 0.4em\relax Springer,
  2008, pp. 748--759.

\bibitem{DR82}
A.~G. D'yachkov and V.~V. Rykov, ``Bounds on the length of disjunctive codes,''
  \emph{Problemy Peredachi Informatsii}, vol.~18, no.~3, pp. 7--13, 1982.

\bibitem{GUV09}
V.~Guruswami, C.~Umans, and S.~Vadhan, ``Unbalanced expanders and randomness
  extractors from {P}arvaresh--{V}ardy codes,'' \emph{J. ACM}, vol.~56, no.~4,
  pp. 20:1--20:34, Jul. 2009.

\bibitem{TUZ07}
A.~Ta-Shma, C.~Umans, and D.~Zuckerman, ``Lossless condensers, unbalanced
  expanders, and extractors,'' \emph{Combinatorica}, vol.~27, no.~2, pp.
  213--240, Mar 2007.

\end{thebibliography}

\end{document}